\crefname{equation}{equation}{equations}
\title{The determining role of covariances in large networks of stochastic neurons}
\author[$*$]{Vincent Painchaud}
\author[$\dagger$,$\mathsection$,$\mathparagraph$]{Patrick Desrosiers}
\author[$\ddagger$,$\mathsection$,$\mathparagraph$]{Nicolas Doyon}
\affil[$*$]{Department of Mathematics and Statistics, McGill University, Montreal, Québec, Canada H3A 0B6}
\affil[$\dagger$]{Départment de physique, de génie physique et d’optique, Université Laval, Quebec City, Québec, Canada G1V 0A6}
\affil[$\ddagger$]{Départment de mathématiques et de statistique, Université Laval, Quebec City, Québec, Canada G1V 0A6}
\affil[$\mathsection$]{CERVO Brain Research Center, Quebec City, Québec, Canada G1E 1T2}
\affil[$\mathparagraph$]{Centre interdisciplinaire en modélisation mathématique de l’Université Laval, Quebec City, Québec, Canada G1V 0A6}
\date{}
\titleformat{\section}{\large\bfseries}{\thesection}{1em}{}
\titleformat{\subsection}{\bfseries}{\thesubsection}{1em}{}
\titleformat{\subsubsection}{\bfseries}{\thesubsubsection}{1em}{}
\newcommand*{\enumlabelformat}[1]{%
    \textnormal{\textit{#1})}%
}
\setlist[enumerate]{
    nosep,
    label=\enumlabelformat{\roman*},
}
\numberwithin{equation}{section}
\newtheoremstyle{plain}
  {\topsep}   
  {\topsep}   
  {\itshape}  
  {0pt}       
  {\bfseries} 
  {.}         
  {5pt plus 1pt minus 1pt} 
  {}          
\newtheorem{theorem}{Theorem}
\newtheorem{corollary}{Corollary}
\newtheorem{proposition}{Proposition}
\newcommand*{\charf}[1]{\mathbf{1}_{#1}}
\newcommand*{\beq}[1]{\begin{equation}\label{#1}}
\newcommand*{\eeq}{\end{equation}}
\renewcommand*{\Re}{\operatorname{Re}}
\renewcommand*{\Im}{\operatorname{Im}}
\newcommand*{\defeq}{\vcentcolon=}
\newcommand*{\eqdef}{=\vcentcolon}
\DeclarePairedDelimiter{\abs}{\lvert}{\rvert}
\DeclarePairedDelimiterX{\inprod}[2]{\langle}{\rangle}{#1,#2}
\DeclarePairedDelimiterXPP{\bprob}[1]{\mathbb{P}^\eta}{[}{]}{}{#1}
\DeclarePairedDelimiterXPP{\expect}[1]{\mathbb{E}^\eta}{[}{]}{}{#1}
\DeclarePairedDelimiterXPP{\expectmu}[1]{\mathbb{E}_\mu}{[}{]}{}{#1}
\DeclarePairedDelimiterXPP{\var}[1]{\mathrm{Var}}{[}{]}{}{#1}
\DeclarePairedDelimiterXPP{\cov}[2]{\mathrm{Cov}}{[}{]}{}{#1, #2}
\newcommand*{\given}[1][]{\nonscript\,#1\vert\nonscript\,\mathopen{}}
\NewDocumentCommand{\E}{m o}{\IfNoValueTF{#2}{\mathcal{#1}}{\mathcal{#1}_{#2}}}
\NewDocumentCommand{\C}{m o}{\IfNoValueTF{#2}{\mathrm{C}_{#1}}{\mathrm{C}_{#1}^{#2}}}
\NewDocumentCommand{\dE}{m o}{\IfNoValueTF{#2}{\dot{\mathcal{#1}}}{\dot{\mathcal{#1}}_{#2}}}
\NewDocumentCommand{\dC}{m o}{\IfNoValueTF{#2}{\dot{\mathrm{C}}_{#1}}{\dot{\mathrm{C}}_{#1}^{#2}}}
\NewDocumentCommand{\EE}{m o}{\mathrm{E}_{#1}^{#2}}
\NewDocumentCommand{\dEE}{m o}{\dot{\mathrm{E}}_{#1}^{#2}}
\begin{document}

\maketitle

\begin{abstract}
Biological neural networks are notoriously hard to model due to their stochastic behavior and high dimensionality. We tackle this problem by constructing a dynamical model of both the expectations and covariances of the fractions of active and refractory neurons in the network's populations. We do so by describing the evolution of the states of individual neurons with a continuous-time Markov chain, from which we formally derive a low-dimensional dynamical system. This is done by solving a moment closure problem in a way that is compatible with the nonlinearity and boundedness of the activation function. Our dynamical system captures the behavior of the high-dimensional stochastic model even in cases where the mean-field approximation fails to do so. Taking into account the second-order moments modifies the solutions that would be obtained with the mean-field approximation, and can lead to the appearance or disappearance of fixed points and limit cycles. We moreover perform numerical experiments where the mean-field approximation leads to periodically oscillating solutions, while the solutions of the second-order model can be interpreted as an average taken over many realizations of the stochastic model. Altogether, our results highlight the importance of including higher moments when studying stochastic networks and deepen our understanding of correlated neuronal activity. 
\end{abstract}

\section{Introduction}

Neuronal activity is intrinsically stochastic. Sources of randomness have been identified at different scales, ranging from spontaneous neurotransmitters release to noise amplification at the network level, and have been shown to contribute to cellular and behavioral trial-to-trial variability \parencite{calvin_synaptic_1968, white_channel_2000, faisal_noise_2008}. Although stochasticity has often been treated as a nuisance, recent studies have highlighted its determining role in neural population coding \parencite{azeredo_da_silveira_geometry_2021} and in many brain functions, such as awareness \parencite{zerlaut_enhanced_2017}, decision making \parencite{lebovich_idiosyncratic_2019}, and working memory \parencite{schneegans_stochastic_2020}.  Moreover, biologically realistic numerical simulations have demonstrated that stochastic networks can support spike-time coding with millisecond precision \parencite{nolte_cortical_2019}, while abstract models have revealed the computational advantage of stochastic spiking neurons over deterministic ones \parencite{maass_noise_2014}.

From a mathematical point of view, incorporating stochasticity into neural network models gives rise to major difficulties, especially when trying to make interpretable predictions about their large-scale dynamical behavior. To address this, a common practice is to perform mean-field approximations \parencite{brunel_dynamics_2000, liley_spatially_2001, ermentrout_mathematical_2010, huang_statistical_2021}, which have long been used in statistical mechanics when studying phase transitions in spin systems \parencite{domb_theory_1960, lebowitz_statistical_1968}. Perhaps the most important example of a mean-field model is the one developed by \textcite{wilson_excitatory_1972}, building on the work of \textcite{beurle_properties_1956} and \textcite{griffith_field_1963, griffith_field_1965}. This model achieved immediate success due to its ability to represent hysteresis phenomena as well as oscillations in biological networks of excitatory and inhibitory neurons. It is still widely used today, sometimes as a firing-rate model \parencite{ermentrout_mathematical_2010, gerstner_time_1995, vogels_neural_2005, keeley_firing_2019}, and has been the starting point of several extensions \parencite{destexhe_wilson-cowan_2009, bressloff_stochastic_2016, cowan_wilsoncowan_2016, chow_before_2020, wilson_evolution_2021}. 

As it is based on a mean-field approximation, the Wilson--Cowan model is restricted to modeling only the average behavior of neural networks, and cannot represent correlations between the states of different neurons. However, correlations have been found to be important in brain activity \parencite{salinas_correlated_2001, schneidman_weak_2006, averbeck_neural_2006, panzeri_structures_2022, azeredo_da_silveira_geometry_2021}, and there has been ongoing effort to generalize the Wilson--Cowan model to take them into account \parencite{chow_before_2020}. An important example of such work is that started by \textcite{buice_field-theoretic_2007, buice_statistical_2009}, later joined by Chow \parencite{buice_systematic_2010, buice_beyond_2013} and followed by \textcite{bressloff_stochastic_2009, bressloff_path-integral_2015} and by \textcite{touboul_finite-size_2011}. These authors used path integral methods to derive a master equation that generalizes the Wilson--Cowan model, and then exploited van Kampen system-size expansions to obtain second-order corrections. This approach has been applied to study the relationship between structure and activity in neural networks \parencite{ocker_linking_2017}.

Although these methods have led to many results, it is not clear that when including refractoriness explicitly in the model, the resulting systems would be appropriate to represent a biological neural network's activity. Indeed, \textcite{painchaud_dynamique_2021} used a different method to obtain a dynamical system analogous to others obtained by \textcite{buice_systematic_2010} and by \textcite{touboul_finite-size_2011}, but which describes the evolution of both active and refractory fractions of neural populations, yet this system can have unbounded solutions which cannot have a biological interpretation. As there is experimental \parencite{berry_refractoriness_1998, avissar_refractoriness_2013} and theoretical \parencite{rule_neural_2019, weistuch_refractory_2021} evidence that refractoriness plays an important role in neuronal activity, it is relevant to retain this aspect in a model that also incorporates correlations between neuronal states.

Here, we propose a new way to generalize the Wilson--Cowan model to explicitly include refractory fractions of neural populations as well as covariances between their activities (including the variance of each activity).\footnote{Throughout the text, unless specified otherwise, we will use the term ``covariances'' to refer both to variances---since a variance is the covariance of a random variable with itself---and to covariances of distinct random variables.} Our main goal is to improve the quality of the mean-field model's prediction of the network's activity while keeping the model relatively simple. To construct our model, the idea is to first build a stochastic process to model the dynamics from the microscopic point of view of individual neurons, and then to use this process to obtain a nonautonomous differential equation that models the same dynamics, but from the macroscopic point of view of neural populations. This nonautonomous differential equation poses a moment closure problem (see \parencite{kuehn_moment_2016} for an extensive discussion), which can be solved while still considering both first- and second-order moments. 

The moment closure framework has not been very popular in computational neuroscience so far, but it has been extensively used in other areas of mathematical biology during the last decades. In particular, it has been used to include covariances in compartmental models in epidemiology \parencite{levin_individuals_1996, keeling_effects_1999, joo_pair_2004, sharkey_deterministic_2008, cator_second-order_2012, kiss_exact_2015} as well as in population dynamics models in ecology \parencite{matsuda_statistical_1992, sato_pathogen_1994}. One of the main advantages of this method is that it reveals precisely where covariances can have an effect on the dynamics. Moreover, it allows a more systematic treatment of the neurons' refractory period, as it has been done by \textcite{painchaud_beyond_2022}. 

The paper is organized as follows. First, we construct a continuous-time Markov chain that describes the evolution of the state of each node in a large network, the goal being to mimic the behavior of biological neurons. This Markov chain is similar to one already proposed (but not extensively studied) by \textcite{cowan_stochastic_1990} and is reminiscent of a process recently studied by \textcite{zarepour_universal_2019}. Then, we split the network into a small set of large populations and derive a nonautonomous system of differential equations that describes the evolution of the expected state of each population. Using a new approximation of the expectation of a sigmoid function, we find a solution to the moment closure problem that involves covariances between population state variables. Finally, we present three examples that demonstrate the impacts of covariances. The first one illustrates how covariances can increase the accuracy when predicting the macroscopic behavior of the Markov chain. In the other two examples, the impact of  including covariances in the dynamical system goes beyond  error reduction: the second-order system reaches a steady state where covariances are nonzero, implying that this steady state is inaccessible to the mean-field model. The paper is complemented by three appendices in which we provide the details of several computations left out of the main text as well as additional figures.

\section{The model}

We seek to describe the dynamics of a large biological neural network from a macroscopic point of view in such a way that the resulting model generalizes Wilson--Cowan's, but also includes correlations between states of neurons. We are also looking for a description that would  depend on statistical properties of neuron parameters rather than on their precise values. To do so, we construct a Markov chain to describe the states of all neurons of the network. Then, we split the network into populations and we derive a dynamical system that describes the evolution of the states of populations. The model constructed in this section is also presented under slightly different assumptions in \parencite{painchaud_beyond_2022}, and it is a special case of a more general model presented in \parencite[Chapter~2]{painchaud_dynamique_2021}.

\subsection{A modelization of a biological neural network}

We consider a network of \(N\) neurons. Links between neurons are described by a random real-valued \(N \times N\) matrix \(W\) that we call the \emph{weight matrix}, and that is defined on a probability space \((H, \mathscr{H}, \mu)\).\footnote{We leave this probability space unspecified as it has no effect on the model, and it can be constructed using the standard method from probability theory. To specify it explicitly, the simplest way is to start from \(\mathbb{R}^{N\times N}\) with its Borel sets \(\mathscr{B}(\mathbb{R}^{N\times N})\) and the law of \(W\), and enlarge it by multiplying it with a space \((\mathbb{R}^N, \mathscr{B}(\mathbb{R}^N), \nu)\) to add a parameter with law \(\nu\) to the model.} An entry \(W_{jk}\) of the weight matrix describes the connection from neuron \(k\) to neuron \(j\); it can be either positive or negative, describing either an excitatory or an inhibitory connection. 

To study the network's dynamics from a macroscopic point of view, we suppose that the neurons are split into a small number \(n\) of distinct subgroups, each consisting of a large number of neurons, which we call \emph{populations}. This splitting is made through a partition \(\mathscr{P}\) of the set \(\{1, \hdots, N\}\) where each \(J \in \mathscr{P}\) corresponds to a population. We assume that the weights \(W_{jk}\) are independent random variables, identically distributed over populations, and assume the same for all parameters that will be introduced in the next section.

\subsection{A description of the evolution of the network's state}

We want the dynamics of our network to model the behavior of biological neurons. The basic behavior we seek to represent is that a neuron fires when it receives sufficient input, and that after firing occurs a short refractory period during which a neuron cannot fire again. To mimic this, we assume that the nodes of our network can take the states described on Figure~\ref{fig.statesandrates}, where the active state corresponds to that of a neuron which is firing. The whole network's state can then be described by an element of the set \(E \defeq \{0,1,i\}^N\).

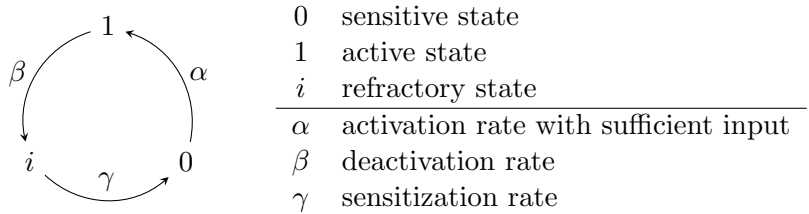
\begin{figure}
\centering
\begin{minipage}[c]{90pt}
    \raggedleft
    \begin{tikzpicture}[x=12mm, y=12mm]
        \node (0) at (-30:1) {\(0\)};
        \node (1) at ( 90:1) {\(1\)};
        \node (i) at (210:1) {\(i\)};
        \draw[-stealth] (0) to[bend right=42] node [midway, right] {\(\alpha\)} (1);
        \draw[-stealth] (1) to[bend right=42] node [midway, left] {\(\beta\)} (i);
        \draw[-stealth] (i) to[bend right=42] node [midway, above] {\(\gamma\)} (0);
    \end{tikzpicture}
\end{minipage}\qquad%
\begin{minipage}[c]{218pt}
    \raggedright
    \begin{tabular}{cl}
        \hline
        \(0\) & sensitive state \\
        \(1\) & active state \\
        \(i\) & refractory state \\
        \hline
        \(\alpha\) & activation rate with sufficient input \\
        \(\beta\) & deactivation rate \\
        \(\gamma\) & sensitization rate \\
        \hline
    \end{tabular}
\end{minipage}
\caption{Neurons' states and allowed transitions between them with corresponding rates. Here, \(i\) denotes the imaginary unit.}
\label{fig.statesandrates}
\end{figure}

To describe the transitions between the states, we consider for each neuron \(j\) three positive random variables \(\alpha_j\), \(\beta_j\) and \(\gamma_j\) and a real-valued one \(\theta_j\), which are all independent and defined on \((H, \mathscr{H}, \mu)\). The possible transitions are described on Figure~\ref{fig.statesandrates}. The parameters \(\beta_j\) and \(\gamma_j\) are the rates at which neuron \(j\) can make the transitions \(1 \mapsto i\) and \(i \mapsto 0\). The activation rate is described by a function \(a_j\colon H \times E \to (0,\infty)\) given by
\[
a_j(\eta, x) \defeq \alpha_j(\eta) \charf{T_j(x)}(\eta),
\]
where \(\charf{T_j(x)}\) is the indicator function of the set
\[
T_j(x) \defeq \Bigl\{ \eta \in H : \sum_{k=1}^N W_{jk}(\eta) \Re x_k + Q_J > \theta_j(\eta) \Bigr\},
\]
\(Q_J \in \mathbb{R}\) being a deterministic external input received by the population \(J \in \mathscr{P}\) to which belongs neuron \(j\). Thus, \(j\) can activate at rate \(\alpha_j\) when its input exceeds its threshold \(\theta_j\).

This description of the microscopic behavior is the intuitive description of a continuous-time Markov chain, which we now define properly. We introduce for each choice of parameters \(\eta \in H\) a matrix \(M^\eta = \{m^\eta(x,y) : x, y \in E\}\) with entries
\[
m^\eta(x,y) \defeq \sum_{j=1}^N m_j^\eta(x,y) \smash{\prod_{\substack{k=1 \\ k\neq j}}^N} \delta_{x_ky_k}
\]
where
\[
\begin{aligned}
m_j^\eta(x,y) \defeq a_j(\eta, x) & (1 - \abs{x_j}) \bigl( \Re y_j - (1 - \abs{y_j}) \bigr) \\
    {} + \beta_j(\eta) & \Re(x_j) (\Im y_j - \Re y_j) \\
    {} + \gamma_j(\eta) & \Im(x_j) \bigl( (1 - \abs{y_j}) - \Im y_j \bigr)
\end{aligned}
\]
and \(\delta_{xy}\) is a Kronecker delta. To make sense of \(m^\eta_j(x,y)\), recall that \(x_j \in \{0,1,i\}\), so that exactly one of \(\Re x_j\), \(\Im x_j\) and \(1 - \abs{x_j}\) is \(1\) while the other two are \(0\), which implies that \(m^\eta_j(x,y)\) is always \(\pm a_j(\eta,x)\), \(\pm\beta_j(\eta)\) or \(\pm\gamma_j(\eta)\). A simple calculation then shows that the matrix \(M^\eta\) is the generator of a continuous-time Markov chain. Thus, it follows from the Kolmogorov extension theorem (see e.g.~\parencite{doob_stochastic_1990, norris_markov_1997} for details) that there exists a probability measure \(\mathbb{P}^\eta\) on \((\Omega,\mathscr{F}) \defeq (E, 2^E)^{[0,\infty)}\) such that for any \(x, y \in E\), as \(\Delta t \downarrow 0\),
\[
\bprob{X_{t+\Delta t}=y \given X_t=x} = \delta_{xy} + m^\eta(x,y) \Delta t + o(\Delta t),
\]
where \(\{X_t\}_{t\geq 0}\) is the coordinate mapping process \(X_t(\omega) \defeq \omega(t)\) on \((\Omega,\mathscr{F})\). In particular, if \(x \in E\) has \(x_j = 0\), then
\begin{subequations}
\label{eq.transitionprobabilities}
\begin{align}
\SwapAboveDisplaySkip
\bprob{X_{t+\Delta t}^j=1 \given X_t=x} & = a_j(\eta,x) \Delta t + o(\Delta t),
\intertext{while in general}
\bprob{X_{t+\Delta t}^j=i \given X_t^j=1} & = \beta_j(\eta) \Delta t + o(\Delta t), \\
\bprob{X_{t+\Delta t}^j=0 \given X_t^j=i} & = \gamma_j(\eta) \Delta t + o(\Delta t),
\end{align}
\end{subequations}
and the forbidden transitions \(0 \mapsto i\), \(i \mapsto 1\) and \(1 \mapsto 0\) all have \(o(\Delta t)\) rates. Hence, the stochastic process \(\{X_t\}_{t\geq 0}\) describes the state of the network with respect to time as described at the beginning of the section.

\subsection{A macroscopic approximation of the Markov chain's behavior}

The above Markov chain completely describes the dynamics of the network from a microscopic point of view. Now, we want to find an approximation of this Markov chain to describe the macroscopic behavior of the network. To make this goal more precise, we introduce for each population \(J \in \mathscr{P}\) the processes
\beq{eq.defARS}
A_t^J \defeq \frac{1}{\abs{J}} \sum_{j\in J} \Re X_t^j, \quad
R_t^J \defeq \frac{1}{\abs{J}} \sum_{j\in J} \Im X_t^j, \quad
S_t^J \defeq \frac{1}{\abs{J}} \sum_{j\in J} \bigl( 1 - \abs{X_t^j} \bigr),
\eeq
which are the fractions of neurons in population \(J\) that are in the active, refractory and sensitive state respectively. Our goal is to find a dynamical system to describe the evolution of
\beq{eq.defEARS}
\E{A}[J](t) \defeq \expect{A_t^J}, \qquad
\E{R}[J](t) \defeq \expect{R_t^J}, \qquad
\E{S}[J](t) \defeq \expect{S_t^J},
\eeq
where \(\mathbb{E}^\eta\) denotes the expectation on \((\Omega,\mathscr{F},\mathbb{P}^\eta)\).

A first step towards an understanding of the evolution of these expected fractions of populations is a description of the evolution of the probabilities
\begin{alignat*}{2}
p_j(t) & \defeq \bprob{X_t^j = 1} && = \expect[\big]{\Re X_t^j}, \\
r_j(t) & \defeq \bprob{X_t^j = i} && = \expect[\big]{\Im X_t^j}, \\
q_j(t) & \defeq \bprob{X_t^j = 0} && = \expect[\big]{1 - \abs{X_t^j}}.
\end{alignat*}
By considering these probabilities at a time \(t + \Delta t\) and conditioning over all events \(\{X_t = x\}\) for \(x \in E\), it is not hard to verify using the transition probabilities given in \cref{eq.transitionprobabilities} that
\begin{subequations}
\label{eq.microsys}
\begin{align}
\label{eq.dpj}
    \dot{p}_j(t) & = - \beta_j(\eta) p_j(t) + \expect[\big]{a_j(\eta, X_t) \bigl( 1 - \abs{X_t^j} \bigr)}, \\
\label{eq.dqj}
    \dot{q}_j(t) & = - \expect[\big]{a_j(\eta, X_t) \bigl( 1 - \abs{X_t^j} \bigr)} + \gamma_j(\eta) r_j(t), \\
\label{eq.drj}
    \dot{r}_j(t) & = - \gamma_j(\eta) r_j(t) + \beta_j(\eta) p_j(t).
\end{align}
\end{subequations}
The details are given in Appendix~\ref{apx.DSmicro1}. We use these differential equations as the starting point to describe the evolution of the expected fractions of populations defined in \cref{eq.defEARS}. Remark that this microscopic model has essentially the same form as models based on firing rates, except for the presence of the factor \(1 - \abs{X_t^j}\) in activation terms.

To obtain expressions for the derivatives of \(\E{A}[J]\), \(\E{R}[J]\) and \(\E{S}[J]\), the idea is to average the derivatives given in \cref{eq.microsys} over \(J\) using the linearity of expectation and derivatives. To avoid overcomplicating the discussion, we give here only the main ideas of the arguments that lead us to the differential equations. The complete details, including the limits of the approximations that are made, are given in Appendix~\ref{apx.DS}. To obtain an expression for \(\dE{A}[J]\), we start from the average
\[
\dE{A}[J](t) = \expect[\bigg]{\frac{1}{\abs{J}} \sum_{j \in J} \Bigl( - \beta_j(\eta) \Re X_t^j + a_j(\eta,X_t) \bigl( 1 - \abs{X_t^j} \bigr) \Bigr)}.
\]
Assuming that each population is large, the law of large numbers now motivates to approximate the averages of transition rates over populations by their expectations. This yields the approximation
\(
\frac{1}{\abs{J}} \sum_{j\in J} \beta_j(\eta) \Re X_t^j \approx \beta_J A_t^J
\)
where \(\beta_J \defeq \expectmu{\beta_j}\) for \(j \in J\), \(\mathbb{E}_\mu\) being the expectation on \((H, \mathscr{H}, \mu)\). The activation rates are handled in the same way, but since these rates are proportional to step functions of the difference between neuronal inputs and thresholds, their expectations are proportional to the cumulative distribution functions \(F_{\theta_J}\) of the thresholds in population \(J\) evaluated at a population-averaged input:
\[
\frac{1}{\abs{J}} \sum_{j\in J} a_j(\eta, X_t) \bigl( 1 - \abs{X_t^j} \bigr) \approx \alpha_J F_{\theta_J}(B_t^J) S_t^J
\]
where \(\alpha_J \defeq \expectmu{\alpha_j}\) and
\[
B_t^J \defeq \sum_{K\in\mathscr{P}} c_{JK} A_t^K + Q_J
\quad\text{with}\quad
c_{JK} \defeq \abs{K} \expectmu{W_{jk}}
\]
for \(j \in J\) and \(k \in K\). 

This method results in approximate expressions for the derivatives of \(\E{A}[J]\), \(\E{R}[J]\) and \(\E{S}[J]\) when populations are large. It ultimately leads to model the macroscopic dynamics of the network by the differential equations
\begin{subequations}
\label{eq.DSopen}
\begin{align}
\SwapAboveDisplaySkip
\label{eq.dAJ}
\dE{A}[J](t) & = - \beta_J \E{A}[J](t) + \alpha_J \expect{F_{\theta_J}(B_t^J) S_t^J}, \\
\label{eq.dRJ}
\dE{R}[J](t) & = - \gamma_J \E{R}[J](t) + \beta_J \E{A}[J](t), \\
\label{eq.dSJ}
\dE{S}[J](t) & = - \alpha_J \expect{F_{\theta_J}(B_t^J) S_t^J} + \gamma_J \E{R}[J](t),
\end{align}
\end{subequations}
where \(\gamma_J \defeq \expectmu{\gamma_j}\) for \(j \in J\). For each population, one of these three equations is redundant since \(A_t^J + R_t^J + S_t^J \equiv 1\) for all \(t\). Thus, if the network has \(n\) populations, \cref{eq.DSopen} corresponds to \(2n\) independent differential equations. In the following, we use the active and refractory fractions as the independent variables, and always see sensitive fractions simply as a function of them.

A crucial aspect of \cref{eq.DSopen} is that it is not autonomous: there is an explicit time dependence in the expectation \(\expect{F_{\theta_J}(B_t^J) S_t^J}\), which is additionally an unknown function. Further assumptions or approximations are required to obtain an autonomous system---this corresponds to a moment closure problem \parencite{kuehn_moment_2016}. The simplest way to close the system would be to use the mean-field approximation \(\expect{F_{\theta_J}(B_t^J) S_t^J} \approx F_{\theta_J}\bigl( \E{B}[J](t) \bigr) \E{S}[J](t)\), where \(\E{B}[J](t) \defeq \expect{B_t^J}\). This results in the dynamical system
\begin{subequations}
\label{eq.meanfield}
\begin{align}
\SwapAboveDisplaySkip
\dE{A}[J] & = - \beta_J \E{A}[J] + \alpha_J F_{\theta_J}(\E{B}[J]) \E{S}[J], \\
\dE{R}[J] & = - \gamma_J \E{R}[J] + \beta_J \E{A}[J],
\end{align}
\end{subequations}
which is closed since for each population \(J\), \(\E{S}[J]\) and \(\E{B}[J]\) are functions of expectations of active and refractory fractions. Since the approximation used to close the system does not depend on the specific choice of parameters \(\eta \in H\), any explicit dependence on \(\eta\) disappeared. Thus, the macroscopic dynamics predicted by the model only depend on the averages of parameters over populations. This will be the case for any macroscopic model obtained from a closure of the system given in \cref{eq.DSopen}, as long as the approximations used to close the system do not depend explicitly on \(\eta\). 

The mean-field system given in \cref{eq.meanfield} has been studied in detail by \textcite{painchaud_beyond_2022} where it is shown to be closely related to Wilson--Cowan's equations \parencite{wilson_excitatory_1972}: if the refractory fractions are forced to their equilibrium solutions \(\E{R}[J] = \frac{\beta_J}{\gamma_J} \E{A}[J]\), the resulting system is equivalent to Wilson--Cowan's. In the next section, we go further and present a solution to the moment closure problem posed by \cref{eq.DSopen} that includes second-order moments, with the goal of improving the mean-field model's predictions of the expectations of the active, refractory and sensitive fractions of each population.

\section{A second-order solution to the moment closure problem}

We are looking for a solution to the moment closure problem posed by \cref{eq.DSopen} that includes covariances between fractions of populations. The first step is to find how the second moments of the active and refractory fractions of populations evolve in time. To simplify notation in the following, we define
\[
\C{YZ}[JK](t)
    \defeq \cov{Y_t^J}{Z_t^K}
    = \expect{Y_t^J Z_t^k} - \expect{Y_t^J} \expect{Z_t^K},
\]
where \(Y\) and \(Z\) stand for either \(A\), \(R\), \(S\) or \(B\), and \(J, K \in \mathscr{P}\). Then, with the method we used above to find \cref{eq.DSopen}, it can be shown (and this is done in Appendix~\ref{apx.DS}) that
\begin{subequations}
\label{eq.DScovs}
\begin{align}
\dC{AA}[JK](t)
    & = - (\beta_J + \beta_K) \C{AA}[JK](t) + \alpha_K \cov{A_t^J}{F_{\theta_K}(B_t^K) S_t^K} + \alpha_J \cov{A_t^K}{F_{\theta_J}(B_t^J) S_t^J}, \\
\dC{RR}[JK](t)
    & = - (\gamma_J + \gamma_K) \C{RR}[JK](t) + \beta_K \C{AR}[KJ](t) + \beta_J \C{AR}[JK](t), \\
\dC{AR}[JK](t)
    & = - (\beta_J + \gamma_K) \C{AR}[JK](t) + \beta_K \C{AA}[JK](t) + \alpha_J \cov{R_t^K}{F_{\theta_J}(B_t^J) S_t^J}.
\end{align}
\end{subequations}
Thus, to obtain a closed dynamical system involving only first and second moments of active and refractory fractions, we need to find approximations to expectations of the forms \(\expect{F_{\theta_J}(B_t^J) S_t^J}\),  \(\expect{F_{\theta_J}(B_t^J) S_t^J A_t^K}\) and \(\expect{F_{\theta_J}(B_t^J) S_t^J R_t^K}\) in terms of expectations and covariances of active and refractory fractions of populations. Remark that, since sensitive fractions and inputs are expressed in terms of active and refractory fractions by linear relations, the bilinearity of covariance implies that any covariance of the form \(\C{YZ}[JK]\) with \(Y\) and \(Z\) being \(A\), \(R\), \(S\) or \(B\) can be expressed using active and refractory fractions only.

In order to lighten notation, the time dependence will be kept implicit in the remainder of the section. For instance, we will write \(B^J\) instead of \(B_t^J\).

\subsection{The naive approach and the need to go further}

The simplest solution to this moment closure problem is to assume that \(F_{\theta_J}\) is smooth enough and to use local approximations given by Taylor expansions to write the problematic expectations in terms of central moments of the dynamical variables. Then, all central moments of order higher than 2 can be neglected to close the system. Indeed, expanding \(F_{\theta_J}\) around the expectation \(\E{B}[J]\) of \(B^J\) and assuming that \(F_{\theta_J}\) is regular enough so that the expectation can be distributed in the series,
\[
\expect{F_{\theta_J}(B^J) S^J}
    = \sum_{k=0}^\infty \frac{1}{k!} F_{\theta_J}^{(k)}(\E{B}[J]) \E{S}[J] \expect[\big]{(B^J - \E{B}[J])^k}
    + \sum_{k=0}^\infty \frac{1}{k!} F_{\theta_J}^{(k)}(\E{B}[J]) \expect[\big]{(S^J - \E{S}[J]) (B^J - \E{B}[J])^k}.
\]
Now, all terms with \(k \geq 3\) in the first series or with \(k \geq 2\) in the second series are proportional to central moments of order at least 3. Neglecting these terms and keeping only the nonzero remaining ones, we find the approximation
\beq{eq.approxTaylorEFS}
\expect{F_{\theta_J}(B^J) S^J}
    \approx F_{\theta_J}(\E{B}[J]) \E{S}[J] + \frac{1}{2} F_{\theta_J}''(\E{B}[J]) \E{S}[J] \C{BB}[JJ] + F_{\theta_J}'(\E{B}[J]) \C{SB}[JJ].
\eeq
As \(B^J\) and \(S^J\) are related to active and refractory fractions of populations by linear relations, using this approximation in \cref{eq.dAJ} allows to write the derivative \(\dE{A}[J]\) in terms of \(\E{A}[J]\)'s and \(\E{R}[J]\)'s only.

The same method can be used on the covariances in the differential equations~\eqref{eq.DScovs}, and leads to
\beq{eq.approxTaylorCAFS}
\cov{A^J}{F_{\theta_K}(B^K) S^K}
    \approx F_{\theta_K}(\E{B}[K]) \C{AS}[JK] + F_{\theta_K}'(\E{B}[K]) \C{AB}[JK] \E{S}[K]
\eeq
as well as a similar one in which \(A\) is replaced with \(R\). 

Using the approximations given in \cref{eq.approxTaylorEFS,eq.approxTaylorCAFS} in the system given by \cref{eq.DSopen,eq.DScovs} leads to the system
\begin{subequations}
\label{eq.DSTaylor}
\begin{align}
\dE{A}[J]
    & = - \beta_J \E{A}[J] + \alpha_J F_{\theta_J}(\E{B}[J]) \E{S}[J] + \alpha_J F_{\theta_J}'(\E{B}[J]) \C{SB}[JJ] + \frac{\alpha_J}{2} F_{\theta_J}''(\E{B}[J]) \E{S}[J] \C{BB}[JJ], \\
\dE{R}[J]
    & = - \gamma_J \E{R}[J] + \beta_J \E{A}[J], \\
\begin{split}
\dC{AA}[JK]
    & = - (\beta_J + \beta_K) \C{AA}[JK] + \alpha_K F_{\theta_K}(\E{B}[K]) \C{AS}[JK] + \alpha_K F_{\theta_K}'(\E{B}[K]) \E{S}[K] \C{AB}[JK] \\
    &\hspace*{44mm} + \alpha_J F_{\theta_J}(\E{B}[J]) \C{AS}[KJ] + \alpha_J F_{\theta_J}'(\E{B}[J]) \E{S}[J] \C{AB}[KJ],
\end{split} \\
\dC{RR}[JK]
    & = - (\gamma_J + \gamma_K) \C{RR}[JK] + \beta_K \C{AR}[KJ] + \beta_J \C{AR}[JK], \\
\dC{AR}[JK]
    & = - (\beta_J + \gamma_K) \C{AR}[JK] + \beta_K \C{AA}[JK] + \alpha_J F_{\theta_J}(\E{B}[J]) \C{RS}[KJ] + \alpha_J F_{\theta_J}'(\E{B}[J]) \E{S}[J] \C{RB}[KJ],
\end{align}
\end{subequations}
which is a closed dynamical system since the linear relations \(S^J = 1 - A^J - R^J\) and \(B^J = \sum_{K\in\mathscr{P}} c_{JK} A^K\) allow to write all covariances in terms of covariances between active and refractory fractions of populations. 

This system has already been studied in detail by \textcite{painchaud_dynamique_2021}, and is reminiscent of other ones that appear in the literature. In particular, neglecting the refractory fractions as well as the factors corresponding to sensitive fractions in the activation terms, the system becomes the infinite-size system of \textcite{touboul_finite-size_2011}, whose one-population case had previously been derived by \textcite{buice_systematic_2010}. 

\begin{figure}
\includegraphics[width=\linewidth]{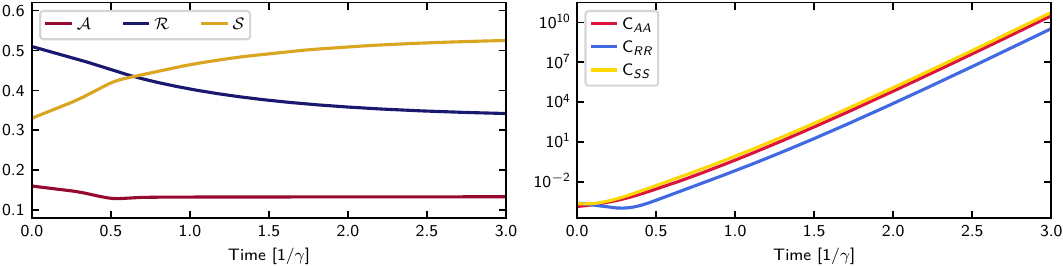}
\caption{Solution of the dynamical system from \cref{eq.DSTaylor} with exponentially diverging variances, an inadmissible behavior. The parameters and initial state were chosen as in the example of Section~\ref{sec.example1}.}
\label{fig.explosions}
\end{figure}

Nevertheless, this solution to the moment closure problem does not seem to be the most appropriate one, at least when refractory fractions of populations are to be included. Indeed, \textcite[Section~4.4]{painchaud_dynamique_2021} found examples in which the solutions of \cref{eq.DSTaylor} have components that blow up. An example is given on Figure~\ref{fig.explosions}, which depicts a solution whose components corresponding to variances diverge exponentially. In such cases, it is clear that the model cannot be interpreted biologically. This failure could be explained by the fact that the Taylor approximation from which the system is derived should only hold locally, when covariances are small. As there is no guarantee that the dynamical system will keep small covariances small, one could expect this to be a problem, especially since the right-hand sides of \cref{eq.approxTaylorEFS,eq.approxTaylorCAFS} are unbounded (with respect to covariances), contrary to the functions they seek to approximate. It might be possible to solve this problem by including higher order terms in the dynamical system, in order to improve the accuracy of the approximations made in \cref{eq.approxTaylorEFS,eq.approxTaylorCAFS}, but this would also have the effect to enlarge the system's dimension. With the goal of improving the approximation without enlarging the dimension, this leads us to propose a different solution to the moment closure problem, which we build around a function that approximates the expectation of a sigmoid function of a random variable in terms of its first and second moments.

\subsection{Approximation of a sigmoid function's expectation}

The difficulty in approximating the expectations \(\expect{F_{\theta_J}(B^J) S^J}\) or \(\expect{F_{\theta_J}(B^J) S^J A^K}\) comes, to a large extent, from the cumulative distribution function \(F_{\theta_J}\). As a first step, we construct an approximation of the expectation of \(F_{\theta_J}(B^J)\) only. To do so, we assume that in each population, the thresholds follow a symmetric, unimodal distribution, so that \(F_{\theta_J}\) has a sigmoid shape with inflection point at the mean \(\theta_J\).

To construct our approximation, we ask a simple question: if the random variable \(B^J\) follows a symmetric distribution with mean \(\E{B}[J]\) and a small variance \(\C{BB}[JJ]\), how should the expectation of \(F_{\theta_J}(B^J)\) compare to \(F_{\theta_J}(\E{B}[J])\)? It is not hard to find a qualitative answer. Indeed, when \(\E{B}[J] < \theta_J\), the fact that \(F_{\theta_J}\) increases faster to the right of \(\E{B}[J]\) than it decreases to its left has the effect that \(F_{\theta_J}(\E{B}[J])\) must underestimate the actual expectation of \(F_{\theta_J}(B^J)\). This effect is reversed when \(\E{B}[J] > \theta_J\). This is sketched on Figure~\ref{fig.approxEF}. 

\begin{figure}
\includegraphics[width=\linewidth]{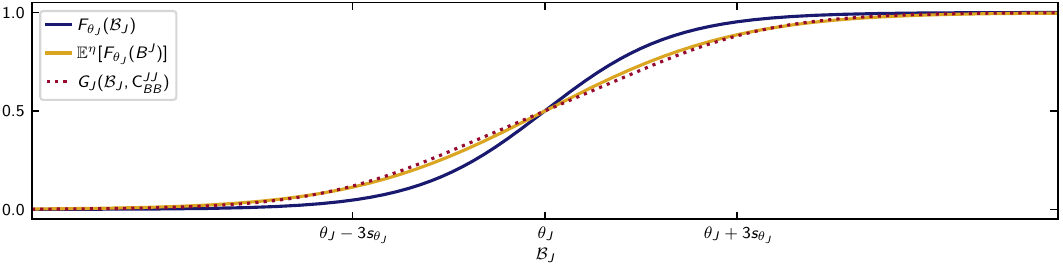}
\caption{Example of comparison between \(F_{\theta_J}(\E{B}[J])\), the actual expectation \(\expect{F_{\theta_J}(B^J)}\), and our approximation \(G_J(\E{B}[J], \C{BB}[JJ])\) for a fixed variance \(\C{BB}[JJ]\), assuming the thresholds follow a logistic distribution with mean \(\theta_J\) and scaling factor \(s_{\theta_J}\). To compute \(\expect{F_{\theta_J}(B^J)}\), we assumed as a heuristic device that \(B^J\) follows a logistic distribution with mean \(\E{B}[J]\) and scaling factor \(s_{\theta_J}\) as well, but this choice was made for illustration purposes only.}
\label{fig.approxEF}
\end{figure}

Observe from Figure~\ref{fig.approxEF} that the deformation of \(F_{\theta_J}(\E{B}[J])\) into \(\expect{F_{\theta_J}(B^J)}\) resembles an increase in the standard deviation of the distribution of which \(F_{\theta_J}\) is the cumulative distribution function. If \(\sigma_{\theta_J}\) denotes the standard deviation of a threshold \(\theta_j\) of population \(J\), then increasing the standard deviation of \(\theta_j\) by a factor \(1 + a\) for some \(a > 0\) corresponds to changing \(\theta_j\) to \(\theta_j(1 + a) - \theta_J a\). The cumulative distribution functions of the two cases can be related as
\[
\bprob{\theta_j (1 + a) - \theta_J a \leq x}
    = \bprob[\Big]{\theta_j \leq \frac{x + \theta_J a}{1 + a}}.
\]
Therefore, our previous observations suggest to look for an approximation of the form
\beq{eq.approxEF}
\expect{F_{\theta_J}(B^J)} \approx G_J\bigl( \E{B}[J], \C{BB}[JJ] \bigr)
\eeq
with
\beq{eq.defGJ.main}
G_J(b,v) \defeq F_{\theta_J}\Bigl( \frac{b + \theta_J g_J(b,v)}{1 + g_J(b,v)} \Bigr),
\eeq
for some \(g_J\colon \mathbb{R} \times [0,\infty) \to [0,\infty)\) to be determined.

We can find a candidate for \(g_J\) based on what \(G_J\) should be for small variances. Indeed, if the variance \(\C{BB}[JJ]\) of \(B^J\) is small, then a second-order Taylor approximation of \(F_{\theta_J}\) should provide a reasonable approximation of \(\expect{F_{\theta_J}(B^J)}\), so that
\beq{eq.approxTaylorEF}
\expect{F_{\theta_J}(B^J)} \approx F_{\theta_J}(\E{B}[J]) + \frac{1}{2} F_{\theta_J}''(\E{B}[J]) \C{BB}[JJ].
\eeq
For the approximation given by \cref{eq.approxEF} to be consistent with this, it must be that
\[
G_J(b,0) = F_{\theta_J}(b)
\quad\text{and}\quad
\partial_2G_J(b,0) = \frac{1}{2} F_{\theta_J}''(b)
\]
for any \(b \in \mathbb{R}\), where \(\partial_2 G_J\) denotes the partial derivative of \(G_J\) with respect to its second argument. Computing this derivative and using the two conditions yields
\[
\frac{1}{2} F_{\theta_J}''(b) = F_{\theta_J}'(b) (\theta_J - b) \partial_2 g_J(b,0),
\]
leading us to define
\beq{eq.defgJ.main}
g_J(b,v) \defeq \frac{v}{2(\theta_J - b)} \frac{F_{\theta_J}''(b)}{F_{\theta_J}'(b)}
\eeq
for \(b \neq \theta_J\), assuming that \(F_{\theta_J}' > 0\). As long as the distribution of the thresholds in population \(J\) is unimodal and symmetric, \(F_{\theta_J}''(b)\) always has the same sign as \(\theta_J - b\), and in particular \(F_{\theta_J}''(\theta_J) = 0\). Hence, \(g_J\) can be continuously extended to \(b = \theta_J\) by definition of the derivative (provided \(F_{\theta_J}'''(\theta_J)\) exists), and it is nonnegative on \(\mathbb{R} \times [0,\infty)\).

For the approximation given by \cref{eq.approxEF} with \(g_J\) defined as above to make sense, we must ensure that \(G_J\) has several properties. First, \(G_J\bigl( \E{B}[J], \C{BB}[JJ] \bigr)\) should still be consistent with \(\expect{F_{\theta_J}(B^J)}\) in extreme cases: if \(B^J\) has zero variance, then it should be possible to replace it with its mean, while if it has infinite variance then \(F_{\theta_J}(B^J)\) should be \(0\) or \(1\) with \(\nicefrac{1}{2}\)--\(\nicefrac{1}{2}\) probabilities. Additionally, if the expected input \(\E{B}[J]\) goes to \(+\infty\) or \(-\infty\), then \(F_{\theta_J}(B^J)\) should go to \(1\) or \(0\) regardless of the variance. Moreover, since \(F_{\theta_J}\) is increasing, \(G_J\bigl( \E{B}[J], \C{BB}[JJ] \bigr)\) should always increase with \(\E{B}[J]\) if the variance is fixed. 

Adding some assumptions on the thresholds' distribution allows to prove that \(G_J\) indeed has all of the properties enumerated above, in addition to being consistent with the Taylor approximation of \(F_{\theta_J}\) given in \cref{eq.approxTaylorEF}.

\begin{theorem}
\label{thm.GJ.main}
Suppose that the thresholds in population \(J\) follow a unimodal and symmetric distribution with mean \(\theta_J\) and cumulative distribution function \(F_{\theta_J}\). Let \(g_J\) and \(G_J\) be defined by \cref{eq.defgJ.main,eq.defGJ.main} respectively. Suppose that
\begin{enumerate}[label=\enumlabelformat{\arabic*}]
    \item \(F_{\theta_J}\) is \(\mathscr{C}^4\) on \(\mathbb{R}\);
    \item \(F_{\theta_J}' > 0\);
    \item \(g_J(\cdot, 1)\) is bounded on \(\mathbb{R}\);
    \item\label{thm.GJ.main.conditionDgJ}
        \(\forall b \in \mathbb{R}\), \(g_J(b,1) + (\theta_J - b) \partial_1 g_J(b,1) \geq 0\).
\end{enumerate}
Then \(G_J\) satisfies the following conditions.
\begin{enumerate}
    \item \(G_J\) is \(\mathscr{C}^1\) on \(\mathbb{R} \times [0,\infty)\);
    \item \(G_J(\cdot, 0) = F_{\theta_J}\) and \(\forall b \in \mathbb{R}, G_J(b, v) \to \nicefrac{1}{2}\) as \(v \to \infty\);
    \item \(\forall v \geq 0, G_J(b,v) \to 0\) as \(b \to -\infty\) and \(G_J(b,v) \to 1\) as \(b \to \infty\);
    \item \(\forall v \geq 0\), \(G_J(\cdot,v)\) is increasing;
    \item \(\partial_2G_J(\cdot,0) = \frac{1}{2} F_{\theta_J}''\).
\end{enumerate}
\end{theorem}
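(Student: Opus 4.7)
The plan is to prove all five properties by reducing everything to the auxiliary map $h(b) \defeq g_J(b,1)$, exploiting the fact that $g_J(b,v) = v\, h(b)$ is linear in $v$. Setting $\phi(b,v) \defeq (b + \theta_J v\, h(b))/(1 + v\, h(b))$ so that $G_J = F_{\theta_J} \circ \phi$, each claimed property of $G_J$ will follow from a corresponding property of $h$ or $\phi$ combined with the known behavior of $F_{\theta_J}$.

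The main obstacle will be (i): showing $h \in \mathscr{C}^1(\mathbb{R})$ despite the apparent singularity of $h(b) = F_{\theta_J}''(b)/(2(\theta_J - b) F_{\theta_J}'(b))$ at $b = \theta_J$. Away from $\theta_J$, smoothness is immediate from $F_{\theta_J} \in \mathscr{C}^4$ and $F_{\theta_J}' > 0$. At $b = \theta_J$, symmetry of the threshold density about $\theta_J$ gives $F_{\theta_J}'(2\theta_J - b) = F_{\theta_J}'(b)$, and differentiating yields $F_{\theta_J}''(\theta_J) = 0$. A division lemma (writing $F_{\theta_J}''(b) = (b - \theta_J) \int_0^1 F_{\theta_J}'''(\theta_J + t(b - \theta_J))\, dt$ via Taylor with integral remainder, valid since $F_{\theta_J}'' \in \mathscr{C}^2$) then shows that $F_{\theta_J}''(b)/(b - \theta_J)$ extends to a $\mathscr{C}^1$ function on $\mathbb{R}$. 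Dividing by $-2 F_{\theta_J}'$, smooth and nonvanishing, yields $h \in \mathscr{C}^1(\mathbb{R})$; since the denominator $1 + v\, h(b)$ stays bounded below by $1$ (as $h \geq 0$ and $v \geq 0$), $\phi$ and hence $G_J$ are $\mathscr{C}^1$ on $\mathbb{R} \times [0, \infty)$.

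For (ii), $g_J(b,0) = 0$ gives $\phi(b,0) = b$ at once. For $v \to \infty$ with $b \neq \theta_J$, I will argue that unimodality and symmetry imply $F_{\theta_J}''$ has the strict sign of $\theta_J - b$, so $h(b) > 0$ and $v\, h(b) \to \infty$, whence $\phi(b,v) \to \theta_J$ and $G_J(b,v) \to F_{\theta_J}(\theta_J) = \nicefrac{1}{2}$ by symmetry; the case $b = \theta_J$ is trivial since $\phi(\theta_J, v) = \theta_J$ identically. For (iii), assumption 3 gives $h$ bounded on $\mathbb{R}$, so for fixed $v$ the denominator $1 + v\, h(b)$ is bounded above and stays above $1$, whence $\phi(b,v) \to \pm \infty$ as $b \to \pm\infty$; the limits of $F_{\theta_J}$ at $\pm \infty$ then give the claim.

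Finally, (iv) and (v) reduce to direct computation of the partial derivatives of $\phi$. One finds $\partial_1 \phi(b,v) = (1 + v[h(b) + (\theta_J - b) h'(b)])/(1 + v\, h(b))^2$, which by assumption \ref{thm.GJ.main.conditionDgJ} is bounded below by $1/(1 + v\, h(b))^2 > 0$; combined with $F_{\theta_J}' > 0$ this yields (iv). Similarly, $\partial_2 \phi(b,v) = (\theta_J - b) h(b)/(1 + v\, h(b))^2$, and evaluating at $v = 0$ with the explicit form of $h$ gives $\partial_2 G_J(b,0) = F_{\theta_J}'(b) (\theta_J - b) h(b) = \tfrac{1}{2} F_{\theta_J}''(b)$, establishing (v).
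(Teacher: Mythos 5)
Your proof is correct and follows the same overall strategy as the paper's: write \(G_J = F_{\theta_J}\circ\phi\) with \(\phi(b,v) = (b+\theta_J v h(b))/(1+v h(b))\), get (ii) and (iii) from the linearity of \(g_J\) in \(v\) and the boundedness of \(g_J(\cdot,1)\), and get (iv) and (v) by computing \(\partial_1\phi\) and \(\partial_2\phi\) explicitly --- your numerators \(1 + v[h(b)+(\theta_J-b)h'(b)]\) and \((\theta_J-b)h(b)\) are exactly the paper's \(1+g_J+(\theta_J-b)\partial_1 g_J\) and \((\theta_J-b)\partial_2 g_J\). The one genuine difference is the treatment of the removable singularity at \(b=\theta_J\) for property (i): the paper differentiates \(g_J\) away from \(\theta_J\) and applies l'Hospital's rule (twice) together with \(F_{\theta_J}''''(\theta_J)=0\) to show that \(\partial_1 g_J\) extends continuously by \(0\) there, whereas you use the Hadamard-type division identity \(F_{\theta_J}''(b) = (b-\theta_J)\int_0^1 F_{\theta_J}'''\bigl(\theta_J + t(b-\theta_J)\bigr)\,\mathrm{d}t\), which exhibits \(F_{\theta_J}''(b)/(b-\theta_J)\) as \(\mathscr{C}^1\) in one stroke and then gives \(h = -\psi/(2F_{\theta_J}')\in\mathscr{C}^1\) by quotient with a nonvanishing \(\mathscr{C}^1\) function; this is a cleaner packaging of the same \(\mathscr{C}^4\) hypothesis and avoids separately checking the value and the limit of the derivative at \(\theta_J\). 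One shared loose end worth noting: your limit \(G_J(b,v)\to\nicefrac{1}{2}\) for \(b\neq\theta_J\) needs \(h(b)>0\) strictly, which does not follow from unimodality alone (a plateau in the density gives \(F_{\theta_J}''=0\) away from \(\theta_J\)); the paper's proof makes exactly the same tacit assumption when it writes \(v g_J(b,1)\to\infty\), so this is not a defect of your argument relative to the paper's.
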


The proof is given in Appendix~\ref{apx.proofs}. Assumption~\ref{thm.GJ.main.conditionDgJ} does not have a simple interpretation, but we use it to prove that the maps \(G_J(\cdot, v)\) are increasing, and it is fairly easy to verify for given distributions. In particular, we show in Appendix~\ref{apx.proofs} that if the thresholds follow either a normal or a logistic distribution, then all assumptions hold. An example of the function \(G_J(\cdot, v)\) for a fixed variance \(v\) is shown on Figure~\ref{fig.approxEF}.

\subsection{The moment closure}

Approximations to expectations of the form \(\expect{F_{\theta_J}(B^J) S^J}\) and \(\expect{F_{\theta_J}(B^J) S^J A^K}\) can be constructed using the function \(G_J\) from the previous section. We start by giving an approximation of \(\expect{F_{\theta_J}(B^J) S^J}\) in terms of the expectations \(\E{S}[J]\) and \(\E{B}[J]\) as well as the covariance \(\C{SB}[JJ]\) and the variance \(\C{BB}[JJ]\). As in the simpler case of the expectation of \(F_{\theta_J}(B^J)\), the approximation of \(\expect{F_{\theta_J}(B^J) S^J}\) must meet some requirements. First, the difference between the approximation and \(\expect{F_{\theta_J}(B^J)} \E{S}[J]\) should approximate the covariance between \(F_{\theta_J}(B^J)\) and \(S^J\), so its sign should always follow that of the covariance \(\C{SB}[JJ]\) because \(F_{\theta_J}\) is increasing. Then, if \(\E{B}[J]\) goes to \(+\infty\) or \(-\infty\), \(F_{\theta_J}(B^J)\) should go to either \(1\) or \(0\), so \(\expect{F_{\theta_J}(B^J) S^J}\) should go to either \(\E{S}[J]\) or \(0\). Finally, the approximation we obtained in \cref{eq.approxTaylorEFS} from a Taylor approximation of \(F_{\theta_J}\), 
\[
\expect{F_{\theta_J}(B^J) S^J}
    \approx F_{\theta_J}(\E{B}[J]) \E{S}[J]
    + \frac{1}{2} F_{\theta_J}''(\E{B}[J]) \E{S}[J] \C{BB}[JJ]
    + F_{\theta_J}'(\E{B}[J]) \C{SB}[JJ],
\]
should be valid when covariances are small. This gives conditions on the value of the approximation and of its partial derivatives when \(\C{SB}[JJ] = \C{BB}[JJ] = 0\).

Now, the function \(G_J\) defined in \cref{eq.defGJ.main} can be used to construct the desired approximation. The following result follows from Theorem~\ref{thm.GJ.main} and is proven in Appendix~\ref{apx.proofs}.

\begin{corollary}
Suppose that all assumptions of Theorem~\ref{thm.GJ.main} hold. Then \(f\colon (0,\infty) \times \mathbb{R}^3 \to (0,\infty)\) defined by
\[
f(s,b,c,v) \defeq s G_J\Bigl( b + \frac{c}{s}, v \Bigr)
\]
satisfies the following conditions for any \(s > 0\), \(b,c \in \mathbb{R}\) and \(v \geq 0\).
\begin{enumerate}
    \item \(f\) is \(\mathscr{C}^1\) on \((0,\infty) \times \mathbb{R}^3\);
    \item \(f(s,b,c,v) \gtreqless sG_J(b,v)\) when \(c \gtreqless 0\), and in particular \(f(s,b,0,v) = sG_J(b,v)\) and \(f(s,b,0,0) = sF_{\theta_J}(b)\);
    \item \(f(s,b,c,v) \to 0\) as \(b\to-\infty\) and \(f(s,b,c,v) \to s\) as \(b \to \infty\);
    \item \(\partial_3f(s,b,0,0) = F_{\theta_J}'(b)\);
    \item \(\partial_4f(s,b,0,0) = \frac{1}{2} s F_{\theta_J}''(b)\).
\end{enumerate}
\end{corollary}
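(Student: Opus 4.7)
The plan is to read each conclusion off from the corresponding property of $G_J$ established in Theorem~\ref{thm.GJ.main}, applying the chain rule to the composition $f(s,b,c,v) = s\, G_J\bigl(\phi(s,b,c),\, v\bigr)$ where $\phi(s,b,c) \defeq b + c/s$ is $\mathscr{C}^\infty$ on $(0,\infty) \times \mathbb{R}^2$. The only thing to keep track of is how the outer factor of $s$ interacts with the $1/s$ hidden inside $\phi$ when one differentiates.

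For (i), $f$ is $\mathscr{C}^1$ on $(0,\infty) \times \mathbb{R}^3$ as a product of smooth functions and a composition of $G_J$ (which is $\mathscr{C}^1$ by Theorem~\ref{thm.GJ.main}.i) with $\phi$. For (ii), I would note that $\phi(s,b,c) \gtreqless b$ precisely when $c \gtreqless 0$, since $s > 0$; monotonicity of $G_J(\cdot,v)$ (Theorem~\ref{thm.GJ.main}.iv) then yields $G_J(\phi(s,b,c),v) \gtreqless G_J(b,v)$, and multiplying by $s > 0$ preserves the inequality. Specializing $c = 0$ gives $f(s,b,0,v) = sG_J(b,v)$, and further using $G_J(\cdot,0) = F_{\theta_J}$ from Theorem~\ref{thm.GJ.main}.ii gives $f(s,b,0,0) = sF_{\theta_J}(b)$.

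For (iii), fix $s > 0$ and $c \in \mathbb{R}$; then $\phi(s,b,c) \to \pm\infty$ as $b \to \pm\infty$, so Theorem~\ref{thm.GJ.main}.iii gives $G_J(\phi(s,b,c),v) \to 0$ or $1$, from which the stated limits follow after multiplying by $s$. For (iv) and (v), the chain rule produces
\[
\partial_3 f(s,b,c,v) = s \cdot \partial_1 G_J\bigl(\phi(s,b,c), v\bigr) \cdot \tfrac{1}{s} = \partial_1 G_J\bigl(\phi(s,b,c), v\bigr),
\]
\[
\partial_4 f(s,b,c,v) = s\, \partial_2 G_J\bigl(\phi(s,b,c), v\bigr).
\]
Evaluating at $c = v = 0$, so that $\phi(s,b,0) = b$, conclusion (iv) follows because $G_J(\cdot,0) = F_{\theta_J}$ implies $\partial_1 G_J(b,0) = F_{\theta_J}'(b)$, and conclusion (v) follows directly from $\partial_2 G_J(\cdot,0) = \frac{1}{2} F_{\theta_J}''$ in Theorem~\ref{thm.GJ.main}.v.

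The whole argument is a bookkeeping exercise once Theorem~\ref{thm.GJ.main} is in hand; there is no real obstacle beyond making sure that the factor $s$ in front of $G_J$ and the $1/s$ arising from differentiating $\phi$ in its third slot combine correctly, which in particular is what makes $\partial_3 f$ dimensionally independent of $s$ while $\partial_4 f$ retains the factor $s$.
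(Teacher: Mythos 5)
Your proposal is correct and follows essentially the same route as the paper: each property is read off from the corresponding property of \(G_J\) in Theorem~\ref{thm.GJ.main} (regularity, monotonicity, limits, and the two derivative identities), with the chain rule handling the factor of \(s\) against the \(1/s\) from the inner argument. The only cosmetic difference is that you obtain \(\partial_1 G_J(b,0) = F_{\theta_J}'(b)\) by differentiating the identity \(G_J(\cdot,0)=F_{\theta_J}\), whereas the paper records it from its explicit formula for \(\partial_1 G_J\); the two are equivalent.
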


These results motivate the approximation
\beq{eq.approxEFS}
\expect{F_{\theta_J}(B^J) S^J} \approx \E{S}[J] G_J\Bigl( \E{B}[J] + \frac{\C{SB}[JJ]}{\E{S}[J]}, \C{BB}[JJ] \Bigr).
\eeq

Approximations to \(\expect{F_{\theta_J}(B^J) S^J A^K}\) and \(\expect{F_{\theta_J}(B^J) S^J R^K}\) can be constructed in a similar way. Now, the only obvious conditions that the approximation should satisfy are those related to the Taylor approximation of \(F_{\theta_J}\). Indeed, neglecting third central moments as before,
\begin{multline*}
\expect{F_{\theta_J}(B^J) S^J A^K}
    \approx \bigl( \E{A}[K]\E{S}[J] + \C{AS}[KJ] \bigr) F_{\theta_J}(\E{B}[J]) \\
    + \bigl( \E{A}[K] \C{SB}[JJ] + \E{S}[J] \C{AB}[KJ] \bigr) F_{\theta_J}'(\E{B}[J])
    + \frac{1}{2} \E{A}[K] \E{S}[J] F_{\theta_J}''(\E{B}[J]) \C{BB}[JJ],
\end{multline*}
which should hold when covariances are small. The appropriate approximation is defined in the following result, which follows from Theorem~\ref{thm.GJ.main} and is proven in Appendix~\ref{apx.proofs}.

\begin{corollary}
Suppose that all assumptions of Theorem~\ref{thm.GJ.main} hold. Then \(f \colon (0,\infty)^2 \times \mathbb{R}^5 \to \mathbb{R}\) defined by
\[
f(x,s,b,c_1,c_2,c_3,v) \defeq (xs + c_1) G_J\Bigl( b + \frac{c_2}{x} + \frac{c_3}{s}, v \Bigr)
\]
satisfies the following conditions for any \(x,s > 0\), \(b \in \mathbb{R}\) and \(v \geq 0\).
\begin{enumerate}
    \item \(f\) is \(\mathscr{C}^1\) on \((0,\infty)^2 \times \mathbb{R}^5\);
    \item \(f(x,s,b,0,0,0,v) = xs G_J(b,v)\), and in particular \(f(x,s,b,0,0,0,0) = xs F_{\theta_J}(b)\);
    \item \(\partial_4 f(x,s,b,0,0,0,0) = F_{\theta_J}(b)\);
    \item \(\partial_5 f(x,s,b,0,0,0,0) = s F_{\theta_J}'(b)\);
    \item \(\partial_6 f(x,s,b,0,0,0,0) = x F_{\theta_J}'(b)\);
    \item \(\partial_7 f(x,s,b,0,0,0,0) = \frac{1}{2} xs F_{\theta_J}''(b)\).
\end{enumerate}
\end{corollary}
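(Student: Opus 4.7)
The plan is to obtain this corollary as a direct consequence of Theorem \ref{thm.GJ.main} via the chain and product rules. The map $f$ is built by composing the smooth map $(x,s,b,c_2,c_3) \mapsto b + c_2/x + c_3/s$ on $(0,\infty)^2 \times \mathbb{R}^3$ with $G_J(\cdot, v)$, then multiplying by the smooth factor $(xs + c_1)$. Since Theorem \ref{thm.GJ.main} gives that $G_J \in \mathscr{C}^1(\mathbb{R} \times [0,\infty))$, standard composition-and-product results yield part (i). Part (ii) is obtained by substituting $c_1 = c_2 = c_3 = 0$ into the defining formula and invoking $G_J(\cdot, 0) = F_{\theta_J}$ from the theorem.

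Parts (iii)--(vi) then reduce to computing the four relevant partials and evaluating them at $(c_1,c_2,c_3,v) = (0,0,0,0)$. The product rule gives $\partial_4 f = G_J\bigl( b + c_2/x + c_3/s, v \bigr)$, which at the base point equals $G_J(b,0) = F_{\theta_J}(b)$, proving (iii). The chain rule then yields
\[
\partial_5 f = \frac{xs + c_1}{x}\, \partial_1 G_J\Bigl( b + \tfrac{c_2}{x} + \tfrac{c_3}{s}, v \Bigr), \quad \partial_6 f = \frac{xs + c_1}{s}\, \partial_1 G_J\Bigl( b + \tfrac{c_2}{x} + \tfrac{c_3}{s}, v \Bigr),
\]
and $\partial_7 f = (xs + c_1)\, \partial_2 G_J\bigl( b + c_2/x + c_3/s, v \bigr)$. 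At the base point, differentiating the identity $G_J(\cdot, 0) = F_{\theta_J}$ (from conclusion ii of the theorem) in $b$ gives $\partial_1 G_J(b, 0) = F_{\theta_J}'(b)$, which handles (iv) and (v) once the prefactors collapse as $xs/x = s$ and $xs/s = x$. Conclusion (vi) then follows immediately from $\partial_2 G_J(b, 0) = \tfrac{1}{2} F_{\theta_J}''(b)$, which is conclusion v of Theorem \ref{thm.GJ.main}.

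There is no genuine obstacle here, only bookkeeping: the slightly asymmetric shapes $s F_{\theta_J}'(b)$ and $x F_{\theta_J}'(b)$ in (iv) and (v) arise precisely from the cancellation between the $1/x$ and $1/s$ factors produced by the chain rule and the $xs$ left over from $(xs + c_1)\vert_{c_1 = 0}$. Since the $\mathscr{C}^4$ regularity of $F_{\theta_J}$ assumed in Theorem \ref{thm.GJ.main} ensures that $F_{\theta_J}'$ and $F_{\theta_J}''$ are well-defined and continuous, all expressions make sense and no further estimates are required.
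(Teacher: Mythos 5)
Your proposal is correct and follows essentially the same route as the paper: regularity of \(G_J\) plus the chain and product rules for part (i), direct substitution for part (ii), and evaluation of the partial derivatives at the base point using \(G_J(\cdot,0)=F_{\theta_J}\), \(\partial_1 G_J(b,0)=F_{\theta_J}'(b)\) and \(\partial_2 G_J(b,0)=\tfrac12 F_{\theta_J}''(b)\) for (iii)--(vi). The only cosmetic difference is that you obtain \(\partial_1 G_J(b,0)=F_{\theta_J}'(b)\) by differentiating the identity \(G_J(\cdot,0)=F_{\theta_J}\), whereas the paper records it from the explicit formula for \(\partial_1 G_J\) computed in the proof of the theorem; both are valid.
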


These results motivate the approximation
\beq{eq.approxEFSA}
\expect{F_{\theta_J}(B^J) S^J A^K} \approx 
\bigl( \E{A}[K] \E{S}[J] + \C{AS}[KJ] \bigr) G_J\Bigl( \E{B}[J] + \frac{\C{AB}[KJ]}{\E{A}[K]} + \frac{\C{SB}[JJ]}{\E{S}[J]}, \C{BB}[JJ] \Bigr)
\eeq
as well as a similar one where \(A\) is replaced with \(R\).

\subsection{The dynamical system}

The last approximation together with \cref{eq.approxEFS} yield a closed dynamical system:
\begin{subequations}
\label{eq.DS}
\begin{align}
\label{eq.DS.A}
\dE{A}[J]
    & = - \beta_J \E{A}[J] + \alpha_J \E{S}[J] G_J\Bigl( \E{B}[J] + \frac{\C{SB}[JJ]}{\E{S}[J]}, \C{BB}[JJ] \Bigr), \\
\dE{R}[J]
    & = - \gamma_J \E{R}[J] + \beta_J \E{A}[J], \\[2pt]
\begin{split}
\dC{AA}[JK]
    & = - (\beta_J + \beta_K) \C{AA}[JK] + \alpha_K H_K(\E{A}[J], \E{S}[K], \E{B}[K], \C{AS}[JK], \C{AB}[JK], \C{SB}[KK], \C{BB}[KK]) \\
        &\hspace*{44mm} + \alpha_J H_J(\E{A}[K], \E{S}[J], \E{B}[J], \C{AS}[KJ], \C{AB}[KJ], \C{SB}[JJ], \C{BB}[JJ]),
\end{split} \\[2pt]
\dC{RR}[JK]
    & = - (\gamma_J + \gamma_K) \C{RR}[JK] + \beta_K \C{AR}[KJ] + \beta_J \C{AR}[JK], \\[2pt]
\dC{AR}[JK]
    & = - (\beta_J + \gamma_K) \C{AR}[JK] + \beta_K \C{AA}[JK] + \alpha_J H_J(\E{R}[K], \E{S}[J], \E{B}[J], \C{RS}[KJ], \C{RB}[KJ], \C{SB}[JJ], \C{BB}[JJ]),
\end{align}
\end{subequations}
where 
\[
H_J(x, s, b, c_1, c_2, c_3, v)
    \defeq (xs + c_1) G_J\Bigl( b + \frac{c_2}{x} + \frac{c_3}{s}, v \Bigr) - xs G_J\Bigl( b + \frac{c_3}{s}, v \Bigr).
\]
For a network of \(n\) populations, there are \(2n\) equations for expectations of active and refractory fractions, \(n(n+1)\) equations for covariances between two active or two refractory fractions (including variances), and \(n^2\) equations for covariances between an active and a refractory fraction. Hence, the whole system has \(n(2n+3)\) dimensions.

This dynamical system can be interpreted from the definition of the function \(G_J\). In particular, \cref{eq.DS.A} has the same form as the corresponding equation in the mean-field system, but the activation function is stretched horizontally by the variance of the input and shifted by the covariance between the input and the sensitive fraction. The derivatives of covariances have similar interpretations.

Corollaries to Theorem~\ref{thm.GJ.main} guarantee that the functions \(G_J\) and \(H_J\) are always \(\mathscr{C}^1\) on the relevant domains. Therefore, the vector field corresponding to \cref{eq.DS} is \(\mathscr{C}^1\) on \((0,1)^{2n} \times \mathbb{R}^{n(2n+1)}\). The Picard--Lindelöf theorem then shows the existence and uniqueness of solutions to the differential equation for any initial condition in \((0,1)^{2n} \times \mathbb{R}^{n(2n+1)}\). However, not every initial condition in this domain can be interpreted in terms of the underlying Markov chain, since the boundedness of the fractions of populations implies various bounds on the covariances, for example through the Bhatia--Davis inequality \parencite{bhatia_better_2000} or the Cauchy--Schwarz inequality. Given the various conditions that relate all covariances together, it is not obvious to usefully characterize the domain where solutions certainly make sense from the point of view of the underlying Markov chain, but we do not expect this to be a problem as long as covariances remain small.

It is worth noting the following result about the case where all covariances are zero.

\begin{proposition}
\label{prop.zerocov}
The domain \(\mathscr{D} \defeq (0,1)^{2n} \times \{0\}^{n(2n+1)}\), where \(\C{AA}[JK] = \C{RR}[JK] = \C{AR}[JK] = 0\) for all \(J, K \in \mathscr{P}\), is invariant under the flow of the dynamical system given in \cref{eq.DS}.
\end{proposition}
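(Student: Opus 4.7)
The plan is to show that the set $\mathscr{D}$ contains a solution trajectory through every one of its initial conditions, and then invoke uniqueness. The candidate trajectory is the one in which the covariances stay identically zero while the expectations evolve according to the mean-field system \eqref{eq.meanfield}. Since the existence-uniqueness result from Picard--Lindelöf is already noted just before the statement, this suffices: any solution starting in $\mathscr{D}$ must coincide with the candidate, hence never leaves $\mathscr{D}$.

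The first step is to observe that on $\mathscr{D}$, \emph{all} the auxiliary covariances appearing on the right-hand side of \eqref{eq.DS}, not just the tracked ones, vanish. Indeed, $S_t^J = 1 - A_t^J - R_t^J$ and $B_t^J = \sum_K c_{JK} A_t^K + Q_J$, so by bilinearity of covariance every quantity of the form $\C{YZ}[JK]$ with $Y, Z \in \{A, R, S, B\}$ is a fixed linear combination of $\C{AA}$, $\C{AR}$, $\C{RR}$ covariances. On $\mathscr{D}$ this linear combination is zero, so in particular $\C{SB}[JJ] = \C{BB}[JJ] = \C{AS}[JK] = \C{AB}[JK] = \C{RS}[KJ] = \C{RB}[KJ] = 0$ at every point of $\mathscr{D}$.

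The second step is to plug these zeros into the right-hand sides of the covariance equations in \eqref{eq.DS}. From the definition
\[
H_J(x,s,b,c_1,c_2,c_3,v) = (xs + c_1) G_J\Bigl(b + \tfrac{c_2}{x} + \tfrac{c_3}{s}, v\Bigr) - xs\, G_J\Bigl(b + \tfrac{c_3}{s}, v\Bigr),
\]
one reads off immediately that $H_J(x,s,b,0,0,0,0) = 0$ for all admissible $x,s,b$. Substituting the vanishing auxiliary covariances into $\dC{AA}[JK]$, $\dC{AR}[JK]$, and $\dC{RR}[JK]$ therefore yields zero in every case. Simultaneously, using $G_J(b, 0) = F_{\theta_J}(b)$ from Theorem~\ref{thm.GJ.main}, the equations for $\dE{A}[J]$ and $\dE{R}[J]$ collapse to the mean-field system \eqref{eq.meanfield}.

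For the third step, fix an initial condition in $\mathscr{D}$ and let $t \mapsto (\E{A}[J](t), \E{R}[J](t))_{J \in \mathscr{P}}$ denote the solution to the mean-field system with that initial condition, which exists locally by standard ODE theory since the vector field in \eqref{eq.meanfield} is $\mathscr{C}^1$. Pairing this curve with identically zero covariances produces, by the second step, a $\mathscr{C}^1$ curve in $\mathscr{D}$ that satisfies the full system \eqref{eq.DS}. By the Picard--Lindelöf uniqueness noted in the preceding paragraph of the paper, this is the unique solution issued from the chosen initial condition, which proves invariance.

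There is no real obstacle; the only thing requiring a moment of care is the first step, namely being explicit that $S$- and $B$-covariances, although not part of the dynamical variables, must still be checked to vanish on $\mathscr{D}$ before the cancellation in $H_J$ can be applied.
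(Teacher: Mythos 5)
Your proof is correct and follows essentially the same route as the paper's: both observe that all auxiliary covariances vanish on $\mathscr{D}$ by bilinearity, deduce that the $H_J$ terms and hence the covariance derivatives vanish, and conclude invariance because the vector field is tangent to $\mathscr{D}$. The only difference is presentational—you make explicit the Picard--Lindel\"of uniqueness step that the paper leaves implicit in the phrase ``parallel to $\mathscr{D}$, which implies its invariance.''
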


\begin{proof}
If \(\C{AA}[JK] = \C{RR}[JK] = \C{AR}[JK] = 0\) for all \(J, K \in \mathscr{P}\), then all other covariances are zero as well. It is then clear from the definition of the functions \(H_J\) and from \cref{eq.DS} that \(\dC{AA}[JK] = \dC{RR}[JK] = \dC{AR}[JK] = 0\). Hence, on the domain \(\mathscr{D}\), the vector field corresponding to the dynamical system is parallel to \(\mathscr{D}\), which implies its invariance.
\end{proof}

Proposition~\ref{prop.zerocov} has an important consequence, since by Theorem~\ref{thm.GJ.main}, \(G_J(\E{B}[J], 0) = F_{\theta_J}(\E{B}[J])\). Indeed, it follows that if all covariances are forced to zero, the system reduces to the mean-field system given in \cref{eq.meanfield}. This shows that we can see the mean-field system as a subsystem of that given in \cref{eq.DS}. In particular, any steady state of the mean-field system is also a steady state of the second-order model in which all covariances are zero. Moreover, since the mean-field system can be seen as an extension of Wilson--Cowan's system \parencite{painchaud_beyond_2022}, this means that our second-order system is also an extension of Wilson--Cowan's.

\section{Examples}

We now illustrate how adding covariances can lead to a better prediction of the underlying Markov chain's macroscopic behavior. We first give an example where the second-order model correctly predicts the steady state to which trajectories of the underlying Markov chain converge, whereas the mean-field model does not. Then, we present two examples in which the second-order model converges to a steady state with nonzero covariances that cannot be described by the mean-field model. In both cases, while the mean-field model is able to predict some aspects of the Markov chain's macroscopic dynamics, the second-order model provides more information, leading to a better prediction of the Markov chain's behavior.

\subsection{A word on the methodology}

In all examples, we compare the solutions of the mean-field and second-order models with trajectories of the underlying Markov chain obtained using the Doob--Gillespie algorithm \parencite{gillespie_general_1976, gillespie_exact_1977}. To produce a meaningful comparison between the results of macroscopic and microscopic models, it is crucial to ensure that the microscopic parameters and initial states are consistent with the macroscopic ones. 

For simplicity, we take the microscopic parameters \(\alpha\), \(\beta\) and \(\gamma\) constant over populations and equal to their macroscopic values. In the same way, having chosen a connection coefficient \(c_{JK}\) from a population \(K\) to a population \(J\), we set the weight of the connection from any neuron \(k \in K\) to any \(j \in J\) to be \(W_{jk} = \nicefrac{c_{JK}}{\abs{K}}\). However, we do not assume the thresholds to be constant: we rather assume that thresholds in a population \(J\) follow a logistic distribution with mean \(\theta_J\) and scaling factor \(s_{\theta_J}\), having cumulative distribution function
\[
F_{\theta_J}(x) = \frac{1}{1 + \exp\Bigl( - \dfrac{x - \theta_J}{s_{\theta_J}} \Bigr)}.
\]

Finally, we always specify the distribution of the microscopic initial state from two pieces of information for each population \(J\):
\begin{enumerate}
    \item values \(\E{A}[J](0)\) and \(\E{R}[J](0)\) for the initial expectations of its active and refractory fractions;
    \item a number \(n_J\) of distinct neuronal states assumed to divide its size \(\abs{J}\).
\end{enumerate}
To do this, we define a collection \(\bigl\{Z_J^\ell : J \in \mathscr{P}, \ell \in \{1, \hdots, n_J\} \bigr\}\) of independent random variables taking values \(1\), \(i\) and \(0\) with probabilities \(\E{A}[J](0)\), \(\E{R}[J](0)\) and \(\E{S}[J](0) = 1 - \E{A}[J](0) - \E{R}[J](0)\) respectively. Then, we split each population \(J\) in \(n_J\) subgroups of \(\nicefrac{\abs{J}}{n_J}\) neurons, and set the initial state of all neurons of the \(\ell\)th subgroup as the random variable \(Z_J^\ell\). For example, if we choose \(n_J = \abs{J}\) for each population, this simply amounts to taking the initial states of all neurons as independent random variables which are identically distributed over populations. If instead \(n_J = \nicefrac{\abs{J}}{10}\) for each population, then the network is split in subgroups of 10 neurons whose initial states are perfectly correlated, but the states of these subgroups are still independent and identically distributed over populations.

This completely specifies the distribution of the initial state \(X_0\), thus specifying initial expectations and covariances as well. For example, the expectation of the active fraction of population \(J\) is
\[
\frac{1}{\abs{J}} \sum_{j\in J} \expect[\big]{\Re X_0^j}
    = \frac{1}{\abs{J}} \sum_{\ell=1}^{n_J} \frac{\abs{J}}{n_J}\, \expect[\big]{\Re Z_J^\ell}
    = \frac{1}{n_J} \sum_{\ell=1}^{n_J} \E{A}[J](0)
    = \E{A}[J](0),
\]
as it should be. The same pattern works for the other two states. Then, similarly,
\[
\C{AA}[JJ](0)
    = \frac{1}{n_J^2} \sum_{\ell=1}^{n_J} \var[\big]{\Re Z_J^\ell}
    = \frac{1}{n_J} \E{A}[J](0) \bigl( 1 - \E{A}[J](0) \bigr),
\]
and again the same pattern works to compute initial variances for the other two states. Covariances between the fractions of a population in two distinct states can then be computed from initial variances, and since the initial states of neurons from distinct populations are always independent, any initial covariance between fractions of two distinct populations is zero. 

This method of choosing the initial state allows us to ensure that the initial state used with the second-order system is consistent with the initial distribution used in simulations of the underlying stochastic process. The distribution is admittedly somewhat artificial, but it allows to introduce between neuronal states nonzero correlations whose magnitude can be modified through the parameter \(n_J\), while still being convenient for numerical simulations. 


\subsection{Covariances can improve the model's accuracy}
\label{sec.example1}

\begin{figure*}
\includegraphics[width=\linewidth]{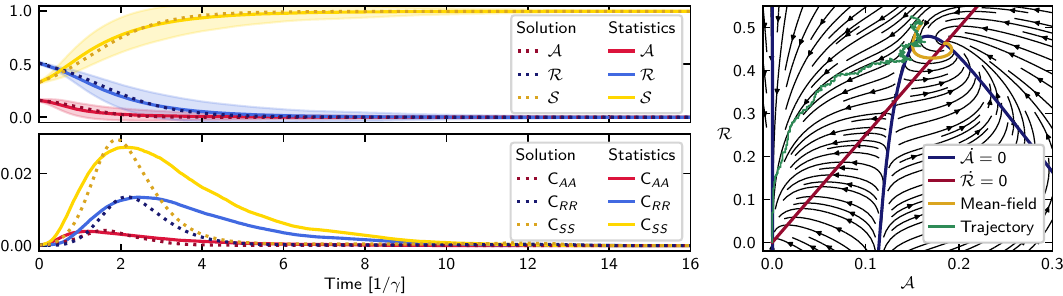}
\caption{Numerical simulations for a network of a single population of 1000 neurons with parameters given in \cref{eq.ex1.params} and initial state given in \cref{eq.ex1.initstate} with \(n = 1000\) distinct initial neuronal states. On the left panels is a comparison between the solution of the second-order dynamical system and statistics computed from 1000 simulated trajectories of the underlying Markov chain. The shaded regions around the curves are associated with statistics, and are bounded above and below by a difference of one standard deviation from the average value. On the right panel is a phase plane of the mean-field dynamical system, on which are plotted the solution of the mean-field system and the macroscopic behavior of a typical trajectory of the Markov chain. See Figure~\ref{fig.ex1.covs} in Appendix~\ref{apx.covs} for the other covariances.}
\label{fig.ex1}
\end{figure*}

The first example is a case where the second-order model allows to predict accurately the macroscopic behavior of the underlying Markov chain, while the mean-field model does not.

Consider a network of \(N\) neurons in a single population, with parameters
\beq{eq.ex1.params}
\begin{aligned}
    \alpha & = 1.4\, [\gamma], &
    \beta & = 2.5\, [\gamma], &
    \gamma & = 1\, [\gamma], \\
    \theta & = 0.75, &
    s_\theta & = 0.1, &
    Q & = 0, &
    c & = 5.5,
\end{aligned}
\eeq
where we dropped the subscripts that would refer to the unique population. Here, we measure characteristic rates in units of \(\gamma\), which is equivalent to measuring time in units of \(\nicefrac{1}{\gamma}\) since every term of the dynamical system is proportional to one of the rates. We fix initial expected values
\beq{eq.ex1.initstate}
(\E{A}, \E{R})(0) = (0.16, 0.51),
\eeq
which determines the distribution of the microscopic initial state (and thus the macroscopic initial covariances) as a function of \(N\). Here, we use a network of \(N = 1000\) neurons with as many distinct initial neuronal states.

Integrating numerically the second-order system with parameters from \cref{eq.ex1.params} and initial state given by \cref{eq.ex1.initstate} yields the solution shown on the left panels of Figure~\ref{fig.ex1}. To avoid overloading the figure, we chose to display only the components of the solution associated with expectations and variances. Since there is only one population, these are in fact enough to determine all components of the solution from the identity \(A + R + S \equiv 1\), by bilinearity and symmetry of the covariance. To provide the complete picture if needed, the other covariances are shown on Figure~\ref{fig.ex1.covs}, in Appendix~\ref{apx.covs}. Then, integrating the mean-field system with the same parameters and initial state yields the solution shown on the phase plane on the right panel of Figure~\ref{fig.ex1}. The two models disagree: the mean-field model predicts that the system converges to a fixed point with about \(20\%\) of active neurons, while the second-order model predicts that all activity will stop.

It is interesting to compare these solutions to statistics computed from trajectories of the underlying Markov chain. Choosing randomly microscopic initial states corresponding to the macroscopic initial state given by \cref{eq.ex1.initstate}, we computed the averages and sample covariances of the three fractions of the network in each of the three states over 1000 simulated trajectories. The results are shown on the left panels of Figure~\ref{fig.ex1}. In this case, the second-order model predicts more accurately the macroscopic behavior of the underlying Markov chain than the mean-field model.

\subsection{The second-order model can describe the averaging of multiple steady states}
\label{sec.example2}

\begin{figure*}
\includegraphics[width=\linewidth]{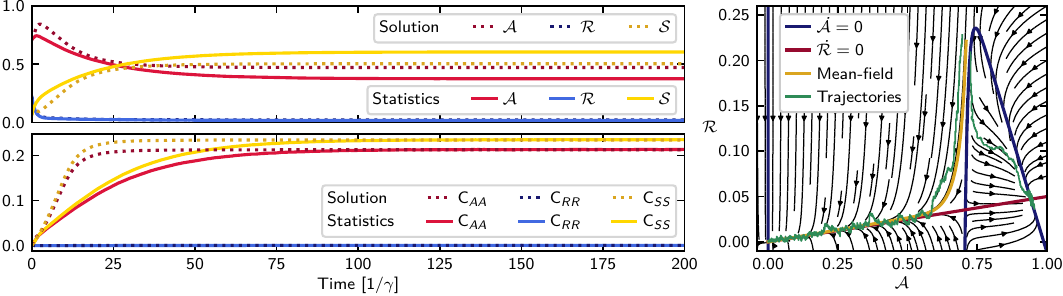}
\caption{Numerical simulations for a network consisting of a single population of 1000 neurons with parameters given in \cref{eq.ex2.params} and initial state given in \cref{eq.ex2.initstate} with \(n = 100\) distinct initial neuronal states. On the left panels is a comparison between the solution of the second-order dynamical system and statistics computed from 1000 simulated trajectories of the underlying Markov chain. On the right panel is a phase plane of the mean-field dynamical system, on which are plotted the solution of the mean-field system and the macroscopic behaviors of two trajectories of the Markov chain. These trajectories are typical representatives of the sets of trajectories that converge to each fixed point. See Figure~\ref{fig.ex2.covs} in Appendix~\ref{apx.covs} for the other covariances.}
\label{fig.ex2}
\end{figure*}

This example illustrates an aspect of the macroscopic dynamics of the Markov chain that can be reproduced by our second-order model and not by the mean-field model.

Consider a network of \(N\) neurons in a single population, with parameters
\beq{eq.ex2.params}
\begin{aligned}
    \alpha & = 4.2\, [\gamma], &
    \beta & = 0.05\, [\gamma], &
    \gamma & = 1\, [\gamma], \\
    \theta & = 12.7, &
    s_\theta & = 0.2, &
    Q & = 0, &
    c & = 17.
\end{aligned}
\eeq
The phase plane of the mean-field system on the right panel of Figure~\ref{fig.ex2} shows that this system has two stable fixed points, one inactive at the origin, and the other very active, with about 95\% of active neurons. Now, fix the initial expectations
\beq{eq.ex2.initstate}
(\E{A},\E{R})(0) = (0.71, 0.221).
\eeq
From this state, integrating the mean-field system yields the solution shown on the phase plane, which converges to the inactive fixed point.

The initial condition given by \cref{eq.ex2.initstate} is very close to the separatrix between the basins of attraction of the two fixed points. As one can expect, it is verified by stochastic simulations of the Markov chain that when starting from a microscopic initial state drawn from a distribution consistent with \cref{eq.ex2.initstate}, a positive variance in the initial state can make trajectories converge to either of the two stable fixed points. Two examples of trajectories are shown on the phase plane on Figure~\ref{fig.ex2}. The second-order model can predict this behavior.

Integrating the second-order system with initial state given by \cref{eq.ex2.initstate} for a network of \(N = 1000\) neurons with \(n = 100\) distinct initial neuronal states yields the solution shown on the left panels of Figure~\ref{fig.ex2}, where as in the first example only variances are displayed, while other covariances are shown on Figure~\ref{fig.ex2.covs} in Appendix~\ref{apx.covs}. It can be seen that the state to which the solution converges is the average of the two stable fixed points of the mean-field system. Moreover, the variances of the active and sensitive fractions of the network are close to \(\nicefrac{1}{4}\), which is seen from the Bhatia--Davis inequality \parencite{bhatia_better_2000} to be the upper bound on the variance of random variables supported on \([0,1]\). To interpret what it means for the variances of \(A_t\) and \(S_t\) to have such high values, recall that the maximal variance of \(\nicefrac{1}{4}\) is attained by a random variable that is \(0\) or \(1\) with probabilities \(\nicefrac{1}{2}\). Thus, the results shown on the left panels of Figure~\ref{fig.ex2} can represent that when \(t\) is large enough, \(A_t\) and \(S_t\) are close to binary random variables with values at the two fixed points of the mean-field system.

To compare this solution with the macroscopic behavior of the underlying Markov chain, we generated 1000 trajectories of the stochastic process with a network of \(N = 1000\) neurons and \(n = 100\) distinct initial neuronal states. The relevant statistics from these trajectories are shown on the left panels of Figure~\ref{fig.ex2}. The results of the stochastic simulations agree well with the prediction of the second-order model. On the other hand, the mean-field model is only a good approximation of the trajectories that converge to the inactive fixed point---it fails to capture the other possible outcome.

\subsection{The second-order model can describe the averaging of oscillations}
\label{sec.example3}

This last example illustrates again the ability of the second-order model to carry information about the distribution of the underlying Markov chain that is inaccessible to the mean-field model.

Consider a network split into two populations \(E\) and \(I\), the former being excitatory with parameters
\begin{subequations}
\label{eq.ex3.params}
\begin{align}
\SwapAboveDisplaySkip
\label{eq.ex3.paramsE}
    \alpha_E & = 0.75\, [\gamma_E], &
    \beta_E & = 0.15\, [\gamma_E], &
    \gamma_E & = 1\, [\gamma_E], \\
    \theta_E & = 0.7, &
    s_{\theta_E} & = 0.2, &
    Q_E & = 0,
\intertext{and the latter being inhibitory with parameters}
\label{eq.ex3.paramsI}
    \alpha_I & = 0.4\, [\gamma_E], &
    \beta_I & = 0.12\, [\gamma_E], &
    \gamma_I & = 0.5\, [\gamma_E], \\
    \theta_I & = 1.8, &
    s_{\theta_I} & = 0.2, &
    Q_I & = 0.
\end{align}
The connections between these populations are given by
\beq{eq.ex3.paramsc}
c = \begin{pmatrix}
    c_{EE} & c_{EI} \\
    c_{IE} & c_{II}
\end{pmatrix} = \begin{pmatrix}
    11 & -12 \\
    12 & -9
\end{pmatrix}.
\eeq
\end{subequations}
Now, fix the initial state from the expectations
\beq{eq.ex3.initstate}
(\E{A}[E], \E{A}[I], \E{R}[E], \E{R}[I])(0) = (0.25, 0.3, 0.2, 0.25).
\eeq
The solution of the mean-field system converges to a limit cycle, as shown on the top left panel of Figure~\ref{fig.ex3}. This solution approximates reasonably well the behavior of the underlying Markov chain, as seen from the example of trajectory shown on the top right panel of Figure~\ref{fig.ex3}.

\begin{figure*}[t!]
\includegraphics[width=\linewidth]{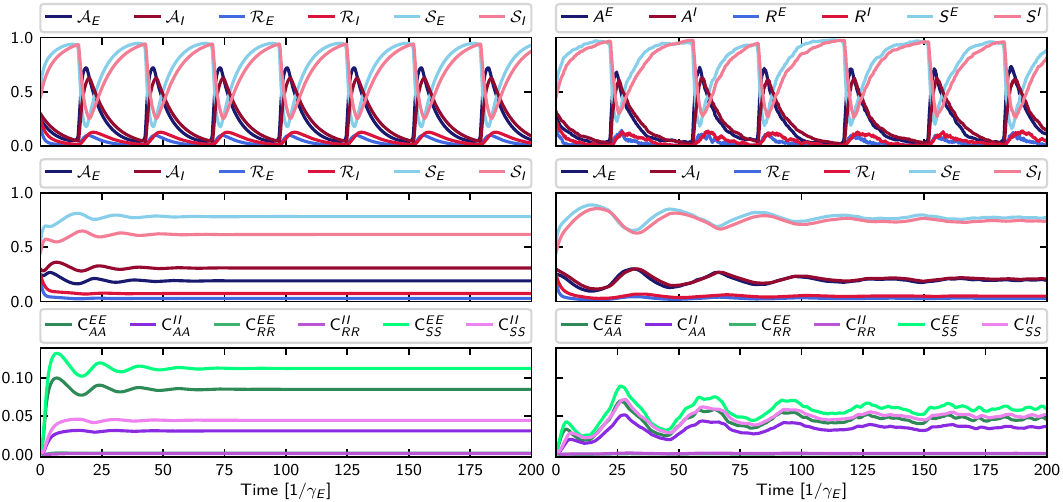}
\caption{Numerical simulations for a network of two populations with parameters given in \cref{eq.ex3.params} and initial state given in \cref{eq.ex3.initstate}, with 500 neurons and 50 distinct initial neuronal states in each population. On the top left panel is the solution of the mean-field dynamical system. On the top right panel is the macroscopic behavior of a typical trajectory of the underlying Markov chain obtained from a numerical simulation. On the middle and bottom panels is a comparison between the solution of the second-order system (left panels) and statistics computed from 1000 simulated trajectories of the Markov chain (right panels). See Figure~\ref{fig.ex3.covs} in Appendix~\ref{apx.covs} for the other covariances.}
\label{fig.ex3}
\end{figure*}

Yet, one can wonder what happens when statistics are computed from the macroscopic behavior of multiple trajectories. As the trajectories are random, the oscillation periods are expected to vary among trajectories, and even among oscillations in each trajectory. Thus, one should expect the oscillations to decay over time when averaging many trajectories. This is the behavior that our second-order model reproduces.

Integrating the second-order system with initial state given by \cref{eq.ex3.initstate} for populations of 500 neurons each with an initial distribution of \(n_E = n_I = 50\) distinct neuronal states yields the solution shown on the middle and bottom left panels of Figure~\ref{fig.ex3}. As in the other examples, in order to avoid overloading the figure we showed only expectations and variances on the figure as these variables are the easier to interpret, but now they are not enough to determine covariances between fractions of two distinct populations in given states. The other covariances are shown on Figure~\ref{fig.ex3.covs}, in Appendix~\ref{apx.covs}. The solution converges to a fixed point where variances are nonzero, and where the value of each expectation is about the average, over a cycle, of the corresponding expectation predicted by the mean-field system. To compare this solution with the macroscopic behavior of the underlying Markov chain, we generated 1000 trajectories in a network with 500 neurons in each population. The relevant statistics from these trajectories are shown on the middle and bottom right panels of Figure~\ref{fig.ex3}.

Our second-order model succeeds in predicting the averaging of the trajectories and the convergence of the system to a fixed point, although it fails to capture correctly the system's behavior in the transient period. This could be due to the fact that the network's size has a sizable effect on the transient behavior. Indeed, the rate at which the average of trajectories converges to the average of the oscillations depends on the relative size of the noise with respect to the oscillation amplitude, which depends on the network's size: the less neurons there are in the network, the faster the oscillations will dampen. However, regardless of the network's size, the stochastic trajectories should always become distributed all over the cycle after some time, and this is what our second-order model captures, at least qualitatively. It is possible that in order to increase the precision of the prediction, one would have to include higher-order moments in the model.

\section{Discussion}

We demonstrated the importance of accounting for covariances between states of neural populations when modeling the activity of large stochastic neural networks. Our observations were based on the low-dimensional dynamical system in \cref{eq.DS}, which we derived mostly by developing a novel method for estimating the expected value of random variables as in \cref{eq.approxEFS,eq.approxEFSA}, and solving the corresponding moment closure problem. We achieved this while explicitly describing the proportion of refractory neurons in each population. Our numerical experiments showed that in some cases, the mean-field solutions greatly differ from both the average of realizations of the stochastic model and the solutions of our second-order model, and that our model succeeds in improving the prediction of the network's activity as compared to the mean-field model.

There is a first class of examples (see Figures~\ref{fig.ex1} and \ref{fig.ex2}) in which the inclusion of covariances leads to predictions that match much more closely the behavior of the high-dimensional stochastic system. This can occur even if after a brief transient period, all covariances go to zero. The presence of two attractors in the mean-field system explains well this phenomenon. Indeed, if the initial conditions lie close to the separatrix between the basins of attraction, the qualitative features of the solution can be highly sensitive to small perturbations of the early system behavior. Failing to account for covariances is akin to not using all the available information about the system's initial state, which may lead to predicting convergence to the wrong attractor. A lot of research has been done recently for systems with only one attracting point, in which case injecting information about covariances increases the order of convergence \parencite{gast_size_2019}. In our case, covariances have more dramatic effects as they may make the system jump basin of attraction.

There is yet another situation in which the covariances may dramatically affect the solutions, as some covariances can remain positive throughout the simulations. As shown on Figure~\ref{fig.ex3}, the solution of the mean-field system can display sustained oscillations while the solution of the second-order model converges to a fixed point. This can be explained by the presence of positive covariances, which impacts the activation functions of neuron populations by making them more linear. Since the presence of multiple fixed points or oscillations in the mean-field model depends on the nonlinearity of the activation function, changing its shape is likely to change the topology of the solutions. 

Despite the improvement achieved by our second-order model in predicting the macroscopic behavior of the stochastic process as compared to the mean-field model, it appears from the results shown on Figures~\ref{fig.ex2} and \ref{fig.ex3} that our second-order model can still not predict the macroscopic behavior of the underlying stochastic process with complete accuracy. Given the nonlinearity of our model and the presence of multiple attractors, obtaining rigorous theorems on rates of convergence or on the quality of our dimension reduction would be difficult. Indeed, the accuracy of the reduction of a high-dimensional stochastic system to a low-dimensional dynamical system is not easy to determine analytically. In the case where the system has a single attracting point, this problem has recently been investigated by \textcite{gast_refined_2017, gast_size_2019, gast_refinements_2020}, but we are not aware of more general results. In order to obtain a more accurate model, it is possible that one could build a different second-order solution to the moment closure problem posed by \cref{eq.DSopen}, but it is also possible that one would have to find a higher-order solution involving at least third-order moments.

We argue for the necessity of taking into account the inherently stochastic nature of neuron activity. This stochastic nature does not come from the way that mathematical modeling of neurons is performed, but from omnipresent biological mechanisms such as vesicle release or random channel state transition. Our results show that using a naive approach to handle this stochasticity, such as a mean-field approximation, can lead to false conclusions, and that a second-order solution to the moment closure problem posed by \cref{eq.DSopen} is more appropriate in several situations. Our work thus emphasizes the importance of pursuing the study of higher moments in models of neuronal activity. In particular, the observation that the solutions of our second-order model correspond to the averages of many trajectories of the stochastic model invites further theoretical investigation. We believe that the mathematical tools developed here will help deriving more effective low-dimensional approximations of stochastic processes on networks, contributing to a better understanding of large-scale dynamical phenomena in complex systems.

\subsubsection*{Code Availability}

All numerical results presented in this paper were obtained with the PopNet package \parencite{painchaud_popnet_2022}, written in the Python programming language and available on GitHub.

\subsubsection*{Acknowledgments}

This work was supported by the Natural Sciences and Engineering Research Council of Canada (N.D, P.D, V.P), the Fonds de recherche du Qu\'ebec -- Nature and technologies (N.D., V.P.), and the Sentinel North program of Universit\'e Laval (N.D., P.D), funded by the Canada First Research Excellence Fund.

\appendix
\setcounter{theorem}{0}
\setcounter{corollary}{0}
\renewcommand{\theHtheorem}{A\thetheorem}
\renewcommand{\theHcorollary}{A\thecorollary}

\section{Derivation of the nonautonomous microscopic and macroscopic evolution equations}
\label{apx.DS}

In this section, we derive the nonautonomous evolution equations given in the main text in \cref{eq.microsys,eq.DSopen,eq.DScovs}. In order to derive \cref{eq.DScovs}, we obtain as an intermediate step yet another system that can be seen as a second-order version of that given in \cref{eq.microsys}. We derive each of these systems in the first four subsections, and then we conclude the section by some remarks on the limits of the approximations that we use to obtain the macroscopic systems.

\subsection{The nonautonomous microscopic first-order evolution equation}
\label{apx.DSmicro1}

First, we derive the expressions of the derivatives of the probabilities
\begin{alignat*}{2}
p_j(t) & \defeq \bprob{X_t^j = 1} && = \expect[\big]{\Re X_t^j}, \\
r_j(t) & \defeq \bprob{X_t^j = i} && = \expect[\big]{\Im X_t^j}, \\
q_j(t) & \defeq \bprob{X_t^j = 0} && = \expect[\big]{1 - \abs{X_t^j}}.
\end{alignat*}
where, as in the main text, the continuous-time Markov chain \(\{X_t\}_{t\geq 0}\) takes its values in \(E \defeq \{0,1,i\}^N\), \(N\) being the number of neurons in the network. An entry \(X_t^j\) is then interpreted as the state of neuron \(j\) at time \(t\).

To obtain the expression of the derivative of \(p_j(t)\), let \(\Delta t > 0\). Conditioning on the state of the network at time \(t\),
\begin{align*}
p_j(t + \Delta t)
    & = \sum_{x\in E} \bprob{X_{t+\Delta t}^j = 1 \given X_t = x} \bprob{X_t = x} \\
    & = \sum_{x\in E} \bigl( \Re x_j + \Im x_j + 1 - \abs{x_j} \bigr) \bprob{X_{t+\Delta t}^j = 1 \given X_t = x} \bprob{X_t = x}.
\end{align*}
Because \(x_j\) has a value in \(\{0,1,i\}\), exactly one of \(\Re x_j\), \(\Im x_j\) or \(1 - \abs{x_j}\) is 1 while the other two are 0. Hence, we can now split the sum in three parts: one for each possible state of \(j\) at time \(t\). Each conditional probability \(\bprob{X_{t+\Delta t}^j = 1 \given X_t = x}\) can then be expressed from transition rates according to the relations given in \cref{eq.transitionprobabilities}:
\begin{align*}
\bprob{X_{t+\Delta t}^j = 1 \given X_t = x, X_t^j = 0}
    & = a_j(\eta,x) \Delta t + o(\Delta t), \\
\bprob{X_{t+\Delta t}^j = 1 \given X_t = x, X_t^j = 1}
    & = 1 - \beta_j(\eta) \Delta t + o(\Delta t), \\
\bprob{X_{t+\Delta t}^j = 1 \given X_t = x, X_t^j = i}
    & = o(\Delta t).
\end{align*}
Using these relations, we obtain that as \(\Delta t \downarrow 0\),
\begin{align*}
p_j(t + \Delta t)
    & = \sum_{x\in E} \Re x_j \bigl( 1 - \beta_j(\eta) \Delta t \bigr) \bprob{X_t = x} 
        + \sum_{x\in E} \bigl( 1 - \abs{x_j} \bigr) a_j(\eta, x) \Delta t \bprob{X_t = x} + o(\Delta t).
\intertext{These sums are now expectations of functions of \(X_t\), and we find that}
p_j(t + \Delta t)
    & = \bigl( 1 - \beta_j(\eta) \Delta t \bigr) p_j(t) + \Delta t \expect[\big]{\bigl( 1 - \abs{X_t^j} \bigr) a_j(\eta, X_t)} + o(\Delta t).
\end{align*}
as \(\Delta t \downarrow 0\). Rearranging and dividing through by \(\Delta t\) yields
\[
\dot{p}_j(t) = - \beta_j(\eta) p_j(t) + \expect[\big]{\bigl( 1 - \abs{X_t^j} \bigr) a_j(\eta, X_t)}.
\]
The derivatives of \(r_j\) and \(q_j\) are obtained in the same way. In these two cases, the transition probabilities \(\bprob{X_{t+\Delta t}^j = i \given X_t = x}\) and \(\bprob{X_{t+\Delta t}^j = 0 \given X_t = x}\) are expressed from other transition rates, and we obtain
\begin{align*}
\SwapAboveDisplaySkip
\dot{r}_j(t) & = - \gamma_j(\eta) r_j(t) + \beta_j(\eta) p_j(t)
\shortintertext{and}
\dot{q}_j(t) & = - \expect[\big]{\bigl( 1 - \abs{X_t^j} \bigr) a_j(\eta, X_t)} + \gamma_j(\eta) r_j(t).
\end{align*}

\subsection{The nonautonomous macroscopic first-order evolution equation}

Now, we use the expressions of the derivatives of \(p_j\), \(q_j\) and \(r_j\) obtained above to find the derivatives of the expected fractions of populations
\[
\E{A}[J](t) \defeq \expect{A_t^J}, \qquad
\E{R}[J](t) \defeq \expect{R_t^J} \qquad\text{and}\qquad
\E{S}[J](t) \defeq \expect{S_t^J},
\]
where \(A_t^J\), \(R_t^J\) and \(S_t^J\) are respectively the fractions of active, refractory and sensitive neurons in population \(J\) at time \(t\). 

To obtain an expression for the derivative of \(\E{A}[J]\), we use the idea described in the main text: we average the derivatives \(\dot{p}_j\) over \(J\) using the linearity of expectations and derivatives. From the expression of \(\dot{p}_j\), this first results in
\[
\dE{A}[J](t)
    = \expect[\bigg]{- \frac{1}{\abs{J}} \sum_{j\in J} \beta_j(\eta) \Re X_t^j + \frac{1}{\abs{J}} \sum_{j\in J} a_j(\eta, X_t) \bigl( 1 - \abs{X_t^j} \bigr)}.
\]
Introducing for \(\xi \in \{0,1,i\}\) the subpopulations \(J_t^\xi \defeq \{j \in J : X_t^j = \xi\}\), the last equality can be written as
\beq{eq.dAJintermediate}
\dE{A}[J](t)
    = \expect[\bigg]{- \frac{A_t^J}{\abs{J_t^1}} \sum_{j \in J_t^1} \beta_j(\eta) + \frac{S_t^J}{\abs{J_t^0}} \sum_{j\in J_t^0} a_j(\eta, X_t)},
\eeq
where we used the identities \(\abs{J_t^1} = \abs{J} A_t^J\) and \(\abs{J_t^0} = \abs{J} S_t^J\). Now, assuming that the number of neurons in \(J\) is large, we expect the number of neurons in \(J\) that are active at time \(t\) to be large as well. Since the random variables \(\beta_j\) for \(j \in J_t^1\) are independent and identically distributed, the law of large numbers motivates the approximation
\beq{eq.approxbeta}
\frac{1}{\abs{J_t^1}} \sum_{j\in J_t^1} \beta_j(\eta) \approx \expectmu{\beta_j} \eqdef \beta_J,
\eeq
where \(\mathbb{E}_\mu\) denotes the expectation on \((H,\mathscr{H},\mu)\). The other term can be handled in a similar way. First, we approximate the input
\[
\sum_{k=1}^N W_{jk}(\eta) \Re X_t^k + Q_J
    = \sum_{K\in\mathscr{P}} \sum_{k\in K_t^1} W_{jk}(\eta) + Q_J
    \approx \sum_{K\in\mathscr{P}} \abs{K} A_t^K \expectmu{W_{jk}} + Q_J.
\]
To simplify notation, we define the input in population \(J\) at time \(t\)
\[
B_t^J \defeq \sum_{K\in\mathscr{P}} c_{JK} A_t^K + Q_J
\qquad\text{with}\qquad
c_{JK} \defeq \abs{K}\expectmu{W_{jk}}
\]
for \(j \in J\) and \(k \in K\). This leads to approximate, for \(j \in J\),
\[
a_j(\eta, X_t) \approx \alpha_j(\eta) \charf{\{B_t^J > \theta_j\}}(\eta),
\]
and the law of large numbers now motivates the approximation
\beq{eq.approxalpha}
\frac{1}{\abs{J_t^0}} \sum_{j\in J_t^0} a_j(\eta, X_t) \approx \expectmu{\alpha_j\charf{\{B_t^J > \theta_j\}}} = \alpha_J F_{\theta_J}(B_t^J),
\eeq
where \(\alpha_J \defeq \expectmu{\alpha_j}\) and  \(F_{\theta_J}\) denotes the cumulative distribution function of \(\theta_j\) for \(j \in J\). 

Using the approximations from \cref{eq.approxbeta,eq.approxalpha} in the expression of \(\dE{A}[J]\) given in \cref{eq.dAJintermediate} yields
\[
\dE{A}[J](t) \approx - \beta_J \E{A}[J](t) + \alpha_J \expect{S_t^J F_{\theta_J}(B_t^J)}.
\]
The same method can be applied to find approximate expressions for the derivatives of \(\E{R}[J]\) and \(\E{S}[J]\). This leads us to model the macroscopic dynamics of the network by the differential equations \eqref{eq.DSopen}.

\subsection{The nonautonomous microscopic second-order evolution equation}

As a first step towards finding the derivatives of covariances given in \cref{eq.DScovs}, we now find expressions for the derivatives of the probabilities
\begin{align*}
p_{jk}(t)
    & \defeq \bprob{X_t^j = X_t^k = 1} 
    = \expect{\Re X_t^j \Re X_t^k}, \\
r_{jk}(t)
    & \defeq \bprob{X_t^j = X_t^k = i} 
    = \expect{\Im X_t^j \Im X_t^k}, \\
\rho_{jk}(t)
    & \defeq \bprob{X_t^j=1, X_t^k=i}
    = \expect{\Re X_t^j \Im X_t^k}.
\end{align*}
The idea is exactly the same as in the case of the probabilities \(p_j\), \(r_j\) and \(q_j\). We will give the details of the calculations only for the case of \(\rho_{jk}\), as it illustrates all of the relevant ideas. 

Let \(\Delta t > 0\). Then
\[
\rho_{jk}(t + \Delta t) = \begin{multlined}[t]
    \sum_{x\in E} \bigl( \Re x_j + \Im x_j + 1 - \abs{x_j} \bigr) \bigl( \Re x_k + \Im x_k + 1 - \abs{x_k} \bigr) \\
    \times \bprob{X_{t+\Delta t}^j = 1, X_{t+\Delta t}^k = i \given X_t = x} \bprob{X_t = x}.
\end{multlined}
\]
The sum now gets split into nine parts, one for each of the possible states for the pair \((j,k)\). Again, the conditional probability \(\bprob{X_{t+\Delta t}^j = 1, X_{t+\Delta t}^k = i \given X_t = x}\) can be expressed in terms of transition rates. These expressions are all derived from the general statement that for any \(x,y \in E\), as \(\Delta t \downarrow 0\),
\[
\bprob{X_{t+\Delta t} = y \given X_t = x} = \delta_{xy} + m^\eta(x,y) \Delta t + o(\Delta t),
\]
where \(m^\eta(x,y)\) is the \((x,y)\) entry of the generator \(M^\eta\), as in the main text. It follows from this statement that as \(\Delta t \downarrow 0\),
\begin{align*}
\bprob{X_{t+\Delta t}^j = 1, X_{t+\Delta t}^k = i \given X_t = x, X_t^j = 1, X_t^k = i}
    & = 1 - \bigl( \beta_j(\eta) + \gamma_k(\eta) \bigr) \Delta t + o(\Delta t), \\
\bprob{X_{t+\Delta t}^j = 1, X_{t+\Delta t}^k = i \given X_t = x, X_t^j = 0, X_t^k = i}
    & = a_j(\eta, x) \Delta t + o(\Delta t), \\
\bprob{X_{t+\Delta t}^j = 1, X_{t+\Delta t}^k = i \given X_t = x, X_t^j = 1, X_t^k = 1}
    & = \beta_k(\eta) \Delta t + o(\Delta t),
\end{align*}
while the transition probabilities from states \(x\) where \(j\) and \(k\) have other states are all \(o(\Delta t)\). Using these relations, we see that as \(\Delta t \downarrow 0\),
\begin{align*}
\rho_{jk}(t+\Delta t)
    & = \sum_{x\in E} \Re x_j \Im x_k \bigl( 1 - \bigl( \beta_j(\eta) + \gamma_k(\eta) \bigr) \Delta t \bigr)  \bprob{X_t = x} \\
    & \hspace*{22mm} + \sum_{x\in E} \bigl( 1 - \abs{x_j} \bigr) \Im x_k\, a_j(\eta, x) \Delta t \bprob{X_t = x} \\
    & \hspace*{22mm} + \sum_{x\in E} \Re x_j \Re x_k\, \beta_k(\eta) \Delta t \bprob{X_t = x} + o(\Delta t) \\
    & = \bigl( 1 - \bigl( \beta_j(\eta) + \gamma_k(\eta) \bigr) \Delta t \bigr) \rho_{jk}(t) \\
    & \hspace*{22mm} + \expect[\big]{\bigl( 1 - \abs{X_t^j} \bigr) \Im X_t^k a_j(\eta, X_t)} \Delta t + \beta_k(\eta) p_{jk}(t) \Delta t + o(\Delta t).
\end{align*}
Rearranging and dividing through by \(\Delta t\) finally yields
\begin{subequations}
\label{eq.microsys2}
\begin{align}
\dot{\rho}_{jk}(t)
    & = - \bigl( \beta_j(\eta) + \gamma_k(\eta) \bigr) \rho_{jk}(t) + \expect[\big]{\bigl( 1 - \abs{X_t^j} \bigr) \Im X_t^k a_j(\eta, X_t)} + \beta_k(\eta) p_{jk}(t).
\intertext{The same method allows to find}
\begin{split}
\dot{p}_{jk}(t)
    & = - \bigl( \beta_j(\eta) + \beta_k(\eta) \bigr) p_{jk}(t) + \expect[\big]{\bigl( 1 - \abs{X_t^j} \bigr) \Re X_t^k a_j(\eta, X_t)} \\
    &\hspace*{55mm} + \expect[\big]{\Re X_t^j \bigl( 1 - \abs{X_t^k} \bigr) a_k(\eta, X_t)}, 
\end{split} \\
\dot{r}_{jk}(t)
    & = - \bigl( \gamma_j(\eta) + \gamma_k(\eta) \bigr) r_{jk}(t) + \beta_j(\eta) \rho_{jk}(t) + \beta_k(\eta) \rho_{kj}(t).
\end{align}
\end{subequations}
We could find similar equations to describe the evolution of similar probabilities with \(j\) or \(k\) being sensitive, but these can always be expressed as functions of \(p_{jk}\), \(r_{jk}\), \(\rho_{jk}\) and \(\rho_{kj}\) since each entry \(X_t^j\) is always in \(\{0,1,i\}\).

\subsection{The nonautonomous macroscopic second-order evolution equation}

We now find the derivatives of covariances between active and refractory fractions of populations of the network. We first find the derivatives of the expectations
\[
\EE{AA}[JK](t) \defeq \expect{A_t^J A_t^K}, \qquad
\EE{RR}[JK](t) \defeq \expect{R_t^J R_t^K} \qquad\text{and}\qquad
\EE{AR}[JK](t) \defeq \expect{A_t^J R_t^K}
\]
from the differential equations \eqref{eq.microsys2}, using the same strategy as for first moments. As in the microscopic case we only detail the case of \(\EE{AR}[JK]\), as the other two are similar.

By linearity of expectations and derivatives,
\begin{align*}
\dEE{AR}[JK](t)
    & = \mathbb{E}^\eta\biggl[ \frac{1}{\abs{J} \abs{K}} \sum_{j \in J, k \in K} \Bigl(
        - \bigl( \beta_j(\eta) + \gamma_k(\eta) \bigr) \Re X_t^j \Im X_t^k \\
        &\hspace*{39mm} + \bigl( 1 - \abs{X_t^j} \bigr) \Im X_t^k a_j(\eta, X_t) + \beta_k(\eta) \Re X_t^j \Re X_t^k
    \Bigr) \biggr] \\
    & = \mathbb{E}^\eta\biggl[ 
        - \frac{A_t^J R_t^K}{\abs{J_t^1}} \sum_{j\in J_t^1} \beta_j(\eta) - \frac{A_t^J R_t^K}{\abs{K_t^i}} \sum_{k\in K_t^i} \gamma_k(\eta) \\
        &\hspace*{39mm} + \frac{S_t^J R_t^K}{\abs{J_t^0}} \sum_{j\in J_t^0} a_j(\eta, X_t) + \frac{A_t^J A_t^K}{\abs{K_t^1}} \sum_{k\in K_t^1} \beta_k(\eta)
    \biggr].
\end{align*}
In the same way as in the case of first-order moments, the law of large numbers motivates the approximations
\[
\frac{1}{\abs{J_t^1}} \sum_{j\in J_t^1} \beta_j(\eta) \approx \beta_J, \qquad
\frac{1}{\abs{K_t^i}} \sum_{k\in K_t^i} \gamma_k(\eta) \approx \gamma_K
\]
and
\[
\frac{1}{\abs{J_t^0}} \sum_{j\in J_t^0} a_j(\eta, X_t) \approx \alpha_J F_{\theta_J}(B_t^J),
\]
and we find the approximate equation
\begin{subequations}
\label{eq.DSopen2}
\begin{align}
\dEE{AR}[JK](t)
    & = - (\beta_J + \gamma_K) \EE{AR}[JK](t) + \alpha_J \expect{S_t^J R_t^K F_{\theta_J}(B_t^J)} + \beta_K \EE{AA}[JK](t).
\intertext{The same method leads to}
\dEE{AA}[JK](t)
    & = - (\beta_J + \beta_K) \EE{AA}[JK](t) + \alpha_J \expect{S_t^J A_t^K F_{\theta_J}(B_t^J)} + \alpha_K \expect{A_t^J S_t^K F_{\theta_K}(B_t^K)}, \\
\dEE{RR}[JK](t)
    & = - (\gamma_J + \gamma_K) \EE{RR}[JK](t) + \beta_J \EE{AR}[JK](t) + \beta_K \EE{AR}[KJ](t).
\end{align}
\end{subequations}

By definition of the covariance, \(\C{AR}[JK] = \EE{AR}[JK] - \E{A}[J]\E{R}[K]\), and similar relations hold for other combinations of fractions of populations. Therefore, the expressions of the derivatives of the covariances \(\C{AA}[JK]\), \(\C{RR}[JK]\) and \(\C{AR}[JK]\) given in \cref{eq.DScovs} follow from \cref{eq.DSopen2} along with the derivatives of \(\E{A}[J]\) and \(\E{R}[J]\) given in \cref{eq.DSopen}.

\subsection{Limits of the approximations}

We start by remarking that, even though we motivate the approximations of the averages of transition rates (e.g.\@ in \cref{eq.approxbeta,eq.approxalpha}) by the law of large numbers, it would not be obvious to make the argument fully rigorous by taking a limit where the sizes of all populations grow infinitely large. The reason for this is that the sets that are needed to be large, which are the subpopulations of active, sensitive and refractory neurons in each population, are themselves random on \((\Omega, \mathscr{F}, \mathbb{P}^\eta)\), but the probability measure \(\mathbb{P}^\eta\) is a function on \((H, \mathscr{H}, \mu)\). Therefore, the setting cannot be easily translated to a simple sequence of independent and identically distributed random variables, and we settle for understanding \cref{eq.approxbeta,eq.approxalpha} as approximations. 

We expect the approximations
\[
\frac{1}{\abs{J_t^1}} \sum_{j\in J_t^1} \beta_j(\eta) \approx \beta_J
\qquad\text{and}\qquad
\frac{1}{\abs{J_t^i}} \sum_{j\in J_t^i} \gamma_j(\eta) \approx \gamma_J
\]
to be good approximations for large populations regardless of the specific distributions of \(\beta_j\) and \(\gamma_j\) in \(J\), as long as these remain independent and identically distributed in \(J\). Indeed, we do not expect the specific values of \(\beta_j\) and \(\gamma_j\) to have an impact on the probabilities for \(j\) to be active and refractory at time \(t\), so we expect the empirical distributions of \(\beta_j\) and \(\gamma_j\) over \(J_t^1\) and \(J_t^i\) to approximate reasonably well the actual distributions when \(J\) is large.

However, the case of the corresponding approximations for the activation rates,
\beq{eq.approxactivation}
\frac{1}{\abs{J_t^0}} \sum_{j\in J_t^0} \alpha_j(\eta) \charf{T_j(X_t)}(\eta)
    \approx \frac{1}{\abs{J_t^0}} \sum_{j\in J_t^0} \alpha_j(\eta) \charf{\{B_t^J > \theta_j\}}(\eta)
    \approx \alpha_J F_{\theta_J}(B_t^J),
\eeq
requires more care. Assuming that every neuron of \(J\) is given the same input \(B_t^J\), the last approximation should hold only when the distribution of \(\theta_j\) in \(J\) is not too spread out. Indeed, only the neurons in \(J\) whose thresholds are lower than their input have a nonzero probability to activate, so it is possible for the specific value of the threshold of a neuron to have an effect on the probability that it is sensitive at a given time. This could lead the empirical distribution of the thresholds in the subpopulation \(J_t^0\) to be biased towards higher thresholds. Nevertheless, since the activation probabilities only depend on the sign of the difference between the input and the threshold, we do not expect this effect to be important when the distribution of the thresholds is peaked enough. Similar arguments also imply that the first approximation in \cref{eq.approxactivation} should only hold when the distributions of the weights \(W_{jk}\) are peaked enough.

\section{Proof of Theorem~\ref{thm.GJ} and its corollaries}
\label{apx.proofs}

In this section, we prove Theorem~\ref{thm.GJ} and its corollaries. For convenience, we recall the general setup.

We study here a function \(F_{\theta_J} \colon \mathbb{R} \to [0,1]\), which is the cumulative distribution function of the thresholds in a population \(J\) of the network. We assume that this distribution is unimodal and symmetric with mean \(\theta_J\). Then, we define the function \(g_J\colon \mathbb{R} \times [0,\infty)\) by setting
\beq{eq.defgJ}
g_J(b,v) \defeq \frac{v}{2(\theta_J - b)} \frac{F_{\theta_J}''(b)}{F_{\theta_J}'(b)}
\eeq
for \(b \neq \theta_J\), assuming \(F_{\theta_J}' > 0\). Since the distribution is unimodal and symmetric around \(\theta_J\), it must be that \(F_{\theta_J}''(\theta_J) = 0\), and \(g_J\) can be continuously extended to \(b = \theta_J\) by definition of the derivative of \(F_{\theta_J}''\), as long as it exists. Finally, we define \(G_J\colon \mathbb{R} \times [0,\infty)\) as
\beq{eq.defGJ}
G_J(b,v) \defeq F_{\theta_J}\Bigl( \frac{b + \theta_J g_J(b,v)}{1 + g_J(b,v)} \Bigr).
\eeq

We now recall Theorem~\ref{thm.GJ} and prove it.

\begin{theorem}
\label{thm.GJ}
Suppose that the thresholds in population \(J\) follow a unimodal and symmetric distribution with mean \(\theta_J\) and cumulative distribution function \(F_{\theta_J}\). Let \(g_J\) and \(G_J\) be defined by \cref{eq.defgJ,eq.defGJ} respectively. Suppose that
\begin{enumerate}[label=\enumlabelformat{\arabic*}]
    \item\label{thm.GJ.conditionregularity} \(F_{\theta_J}\) is \(\mathscr{C}^4\) on \(\mathbb{R}\);
    \item\label{thm.GJ.conditionpositivity} \(F_{\theta_J}' > 0\);
    \item\label{thm.GJ.conditionboundedness} \(g_J(\cdot, 1)\) is bounded on \(\mathbb{R}\);
    \item\label{thm.GJ.conditionDgJ} \(\forall b \in \mathbb{R}\), \(g_J(b,1) + (\theta_J - b) \partial_1 g_J(b,1) \geq 0\).
\end{enumerate}
Then \(G_J\) satisfies the following conditions.
\begin{enumerate}
    \item\label{thm.GJ.regularity} \(G_J\) is \(\mathscr{C}^1\) on \(\mathbb{R} \times [0,\infty)\);
    \item\label{thm.GJ.extremevariance} \(G_J(\cdot, 0) = F_{\theta_J}\) and \(\forall b \in \mathbb{R}, G_J(b, v) \to \nicefrac{1}{2}\) as \(v \to \infty\);
    \item\label{thm.GJ.extremeinput} \(\forall v \geq 0, G_J(b,v) \to 0\) as \(b \to -\infty\) and \(G_J(b,v) \to 1\) as \(b \to \infty\);
    \item\label{thm.GJ.increasing} \(\forall v \geq 0\), \(G_J(\cdot,v)\) is increasing;
    \item\label{thm.GJ.secondderivative} \(\partial_2G_J(\cdot,0) = \frac{1}{2} F_{\theta_J}''\).
\end{enumerate}
\end{theorem}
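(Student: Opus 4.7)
The plan is to work through the five properties in order, with the subtle points being the extension of $g_J$ at $b=\theta_J$ and the monotonicity claim. The linear dependence $g_J(b,v)=v\,g_J(b,1)$ is the workhorse: it reduces every statement about $v\geq 0$ to a statement about $g_J(\cdot,1)$.

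For property (i), I first verify that $g_J$ is $\mathscr{C}^1$ on $\mathbb{R}\times[0,\infty)$. Away from $b=\theta_J$ this is immediate from hypothesis~\ref{thm.GJ.conditionregularity} and~\ref{thm.GJ.conditionpositivity}. At $b=\theta_J$, the key is the symmetry of the threshold distribution, which forces $F_{\theta_J}''$ to be antisymmetric about $\theta_J$. In particular $F_{\theta_J}''(\theta_J)=0$, and writing $F_{\theta_J}''(b)=(b-\theta_J)\,h(b)$ with $h(b)=\int_0^1 F_{\theta_J}'''(\theta_J+t(b-\theta_J))\,\mathrm{d}t$ produces a $\mathscr{C}^1$ factorization (using the assumption that $F_{\theta_J}\in\mathscr{C}^4$ to differentiate under the integral once). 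Then $g_J(b,v)=-v\,h(b)/(2F_{\theta_J}'(b))$ extends as a $\mathscr{C}^1$ function, and $G_J$ inherits $\mathscr{C}^1$ regularity by composition with $F_{\theta_J}$ through the rational map $\phi(b,v)\defeq (b+\theta_J g_J(b,v))/(1+g_J(b,v))$, whose denominator stays positive since $g_J\geq 0$.

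Properties (ii) and (iii) follow from direct limits on $\phi$. At $v=0$ we have $g_J=0$ and $\phi(b,0)=b$, giving $G_J(\cdot,0)=F_{\theta_J}$. As $v\to\infty$, whenever $g_J(b,1)>0$ (so $b\neq\theta_J$), $g_J(b,v)\to\infty$ and $\phi(b,v)\to\theta_J$, so $G_J(b,v)\to F_{\theta_J}(\theta_J)=\tfrac{1}{2}$ by symmetry; at $b=\theta_J$, $\phi\equiv\theta_J$ trivially. For property (iii), the boundedness of $g_J(\cdot,1)$ (assumption~\ref{thm.GJ.conditionboundedness}), together with linearity in $v$, guarantees that $g_J(\cdot,v)$ is bounded for each fixed $v$; hence $\phi(b,v)\to\pm\infty$ as $b\to\pm\infty$, and $G_J$ limits to $0$ or $1$.

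Property (iv) is the one requiring care. A direct computation gives
\[
\partial_1\phi(b,v) = \frac{1+g_J(b,v)+(\theta_J-b)\,\partial_1 g_J(b,v)}{(1+g_J(b,v))^2}.
\]
Using $g_J(b,v)=v\,g_J(b,1)$ and the resulting $\partial_1 g_J(b,v)=v\,\partial_1 g_J(b,1)$, the numerator equals $1+v\bigl[g_J(b,1)+(\theta_J-b)\,\partial_1 g_J(b,1)\bigr]$, which is nonnegative for all $v\geq 0$ by hypothesis~\ref{thm.GJ.conditionDgJ}. Since $F_{\theta_J}'>0$, this yields the strict monotonicity of $G_J(\cdot,v)$. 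Finally, for property (v), chain-rule differentiation gives $\partial_2 G_J(b,v)=F_{\theta_J}'(\phi(b,v))\cdot\partial_2\phi(b,v)$ with $\partial_2\phi(b,v)=(\theta_J-b)\,\partial_2 g_J(b,v)/(1+g_J(b,v))^2$; evaluating at $v=0$ and using $\partial_2 g_J(b,0)=g_J(b,1)=\tfrac{1}{2(\theta_J-b)}F_{\theta_J}''(b)/F_{\theta_J}'(b)$ collapses the expression to $\tfrac{1}{2}F_{\theta_J}''(b)$, again with the $b=\theta_J$ case handled by the $\mathscr{C}^1$ extension from step~(i). The main obstacle is keeping track of the regularity at $b=\theta_J$ throughout; once that is set up cleanly via the Taylor-integral factorization, the rest is bookkeeping.
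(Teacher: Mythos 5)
Your proof is correct and follows essentially the same route as the paper's: limits of the inner rational map for properties (ii)–(iii), explicit computation of $\partial_1 G_J$ and $\partial_2 G_J$ combined with hypothesis~4 for (iv) and (v). The only difference is cosmetic: where the paper establishes the $\mathscr{C}^1$ regularity of $g_J$ at $b=\theta_J$ by l'Hospital's rule (invoking $F_{\theta_J}''(\theta_J)=F_{\theta_J}''''(\theta_J)=0$ from symmetry), you use the Hadamard-type factorization $F_{\theta_J}''(b)=(b-\theta_J)\int_0^1 F_{\theta_J}'''(\theta_J+t(b-\theta_J))\,\d t$, which is arguably cleaner and handles the point $b=\theta_J$ uniformly in the later derivative computations.
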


\begin{proof}
We start by proving the second and third properties. First, note that for any \(b \in \mathbb{R}\), \(g_J(b,0) = 0\) so \(G_J(b,0) = F_{\theta_J}(b)\). Then, as \(v \to \infty\), \(g_J(b,v) = v g_J(b,1) \to \infty\), which implies that
\[
\frac{b + \theta_J g_J(b,v)}{1 + g_J(b,v)} \to \theta_J
\qquad\text{so that}\qquad
G_J(b,v) \to F_{\theta_J}(\theta_J) = \frac{1}{2},
\]
proving \ref{thm.GJ.extremevariance}. Similarly, the fact that \(g_J(\cdot,1)\) is bounded implies that \(g_J\) is bounded with respect to its first argument, so
\[
\frac{b + \theta_J g_J(b,v)}{1 + g_J(b,v)} \to \pm\infty
\]
as \(b \to \pm\infty\), and property \ref{thm.GJ.extremeinput} follows by properties of a cumulative distribution function.

To prove the other three properties, we compute the derivatives of \(G_J\). A simple calculation shows that
\[
\partial_1 g_J(b,v) = \frac{v}{2(\theta_J - b)} \biggl( \frac{F_{\theta_J}'''(b)}{F_{\theta_J}'(b)} + \frac{F_{\theta_J}''(b)}{F_{\theta_J}'(b) (\theta_J - b)} - \Bigl( \frac{F_{\theta_J}''(b)}{F_{\theta_J}'(b)} \Bigr)^2 \biggr)
\]
for \(b \neq \theta_J\), and applying l'Hospital's rule shows that as \(b \to \theta_J\), \(\partial_1 g_J(b,v) \to - \frac{F_{\theta_J}''''(b)}{2F_{\theta_J}'(\theta_J)}\). This is zero since \(F_{\theta_J}''''(\theta_J) = 0\), as the distribution is unimodal and symmetric. It is easy to verify by applying l'Hospital's rule twice that \(\partial_1 g_J(\theta_J, v) = 0\), so that \(\partial_1 g_J\) exists and is continuous on \(\mathbb{R} \times [0,\infty)\). As it is clear that \(\partial_2 g_J\) exists and is continuous on \(\mathbb{R} \times [0,\infty)\), it follows that \(g_J\) is \(\mathscr{C}^1\). This directly implies that \(G_J\) is \(\mathscr{C}^1\) as well. 

Now, a direct computation leads to
\[
\partial_1 G_J(b,v) = F_{\theta_J}'\Bigl( \frac{b + \theta_J g_J(b,v)}{1 + g_J(b,v)} \Bigr) \frac{1 + g_J(b,v) + (\theta_J - b) \partial_1 g_J(b,v)}{\bigl( 1 + g_J(b,v) \bigr)^2}.
\]
This is always positive by assumption \ref{thm.GJ.conditionDgJ}, and property \ref{thm.GJ.increasing} follows. We record here for future computations that the above implies that
\beq{eq.D1GJ(b;0)}
\partial_1 G_J(b,0) = F_{\theta_J}'(b).
\eeq
Finally,
\[
\partial_2 G_J(b,v) = F_{\theta_J}'\Bigl( \frac{b + \theta_J g_J(b,v)}{1 + g_J(b,v)} \Bigr) \frac{(\theta_J - b) \partial_2 g_J(b,v)}{\bigl( 1 + g_J(b,v) \bigr)^2},
\]
and property \ref{thm.GJ.secondderivative} follows by evaluating at \(v = 0\).
\end{proof}

We now prove the corollaries of the Theorem. 

\begin{corollary}
Suppose that all assumptions of Theorem~\ref{thm.GJ} hold. Then \(f\colon (0,\infty) \times \mathbb{R}^3 \to (0,\infty)\) defined by
\[
f(s,b,c,v) \defeq s G_J\Bigl( b + \frac{c}{s}, v \Bigr)
\]
satisfies the following conditions for any \(s > 0\), \(b,c \in \mathbb{R}\) and \(v \geq 0\).
\begin{enumerate}
    \item\label{cor1.regularity} \(f\) is \(\mathscr{C}^1\) on \((0,\infty) \times \mathbb{R}^3\);
    \item\label{cor1.sign} \(f(s,b,c,v) \gtreqless sG_J(b,v)\) when \(c \gtreqless 0\), and in particular \(f(s,b,0,v) = sG_J(b,v)\) and \(f(s,b,0,0) = sF_{\theta_J}(b)\);
    \item\label{cor1.limits} \(f(s,b,c,v) \to 0\) as \(b\to-\infty\) and \(f(s,b,c,v) \to s\) as \(b \to \infty\);
    \item\label{cor1.D1} \(\partial_3f(s,b,0,0) = F_{\theta_J}'(b)\);
    \item\label{cor1.D2} \(\partial_4f(s,b,0,0) = \frac{1}{2} s F_{\theta_J}''(b)\).
\end{enumerate}
\end{corollary}

\begin{proof}
Properties \ref{cor1.regularity}, \ref{cor1.sign} and \ref{cor1.limits} follow respectively from properties \ref{thm.GJ.regularity}, \ref{thm.GJ.increasing} and \ref{thm.GJ.extremeinput} of \(G_J\) given in Theorem~\ref{thm.GJ}. Then, properties \ref{cor1.D1} and \ref{cor1.D2} follow by evaluating the derivatives using \cref{eq.D1GJ(b;0)} and property \ref{thm.GJ.secondderivative} of \(G_J\).
\end{proof}

\begin{corollary}
Suppose that all assumptions of Theorem~\ref{thm.GJ} hold. Then \(f \colon (0,\infty)^2 \times \mathbb{R}^5 \to \mathbb{R}\) defined by
\[
f(x,s,b,c_1,c_2,c_3,v) \defeq (xs + c_1) G_J\Bigl( b + \frac{c_2}{x} + \frac{c_3}{s}, v \Bigr)
\]
satisfies the following conditions for any \(x,s > 0\), \(b \in \mathbb{R}\) and \(v \geq 0\).
\begin{enumerate}
    \item\label{cor2.regularity} \(f\) is \(\mathscr{C}^1\) on \((0,\infty)^2 \times \mathbb{R}^5\);
    \item\label{cor2.zerocovs} \(f(x,s,b,0,0,0,v) = xs G_J(b,v)\), in particular \(f(x,s,b,0,0,0,0) = xs F_{\theta_J}(b)\);
    \item\label{cor2.D4} \(\partial_4 f(x,s,b,0,0,0,0) = F_{\theta_J}(b)\);
    \item \(\partial_5 f(x,s,b,0,0,0,0) = s F_{\theta_J}'(b)\);
    \item \(\partial_6 f(x,s,b,0,0,0,0) = x F_{\theta_J}'(b)\);
    \item\label{cor2.D7} \(\partial_7 f(x,s,b,0,0,0,0) = \frac{1}{2} xs F_{\theta_J}''(b)\).
\end{enumerate}
\end{corollary}

\begin{proof}
Property \ref{cor2.regularity} follows from property \ref{thm.GJ.regularity} of \(G_J\) in Theorem~\ref{thm.GJ}. Then, property \ref{cor2.zerocovs} follows by direct evaluation, and properties \ref{cor2.D4} to \ref{cor2.D7} follow by evaluating the derivatives using \cref{eq.D1GJ(b;0)} and properties \ref{thm.GJ.extremevariance} and \ref{thm.GJ.secondderivative} of \(G_J\).
\end{proof}

\subsection{Verification of the assumptions for a normal distribution}

Suppose that the thresholds in population \(J\) follow a normal distribution with mean \(\theta_J\) and variance \(\sigma_{\theta_J}^2\), which has density
\[
F_{\theta_J}'(b) = \frac{1}{\sigma_{\theta_J} \sqrt{2\pi}} \exp\Bigl( - \frac{(b - \theta_J)^2}{2\sigma_{\theta_J}^2} \Bigr).
\]
This distribution is indeed unimodal and symmetric. Moreover, its cumulative distribution function \(F_{\theta_J}\) is smooth, and the density \(F_{\theta_J}'\) is positive, so assumptions \ref{thm.GJ.conditionregularity} and \ref{thm.GJ.conditionpositivity} are satisfied. 

To verify the other two assumptions, we compute \(g_J\) for this distribution.
\[
F_{\theta_J}''(b) = \frac{\theta_J - b}{\sigma_{\theta_J}^3 \sqrt{2\pi}} \exp\Bigl( - \frac{(b - \theta_J)^2}{2\sigma_{\theta_J}^2} \Bigr) = \frac{\theta_J - b}{\sigma_{\theta_J}^2} F_{\theta_J}'(b),
\qquad\text{so}\qquad
g_J(b,v) = \frac{v}{2\sigma_{\theta_J}^2}.
\]
Then \(g_J(\cdot, 1) \equiv \nicefrac{1}{2\sigma_{\theta_J}^2}\) is bounded so assumptions \ref{thm.GJ.conditionboundedness} is satisfied, and it is positive so \ref{thm.GJ.conditionDgJ} is directly satisfied since \(g_J\) does not depend on its first argument.

Therefore, the conclusions of Theorem~\ref{thm.GJ} hold if the thresholds in population \(J\) follow a normal distribution.

\subsection{Verification of the assumptions for a logistic distribution}

Suppose that the thresholds in population \(J\) follow a logistic distribution with mean \(\theta_J\) and scaling factor \(s_{\theta_J}\), which has cumulative distribution function
\[
F_{\theta_J}(b) = \sigma\Bigl( \frac{b - \theta_J}{s_{\theta_J}} \Bigr)
\qquad\text{where}\qquad
\sigma(x) \defeq \frac{1}{1 + e^{-x}}.
\]
This distribution is unimodal and symmetric, and \(F_{\theta_J}\) is smooth so assumption \ref{thm.GJ.conditionregularity} is satisfied.

The logistic function \(\sigma\) has the property that \(\sigma' = \sigma(1 - \sigma)\). This property allows to compute
\[
F_{\theta_J}' = \frac{1}{s_{\theta_J}} F_{\theta_J} (1 - F_{\theta_J})
\qquad\text{and}\qquad
F_{\theta_J}'' = \frac{1}{s_{\theta_J}} F_{\theta_J}' (1 - 2F_{\theta_J}).
\]
Since the range of \(F_{\theta_J}\) is \((0,1)\), it follows that \(F_{\theta_J}' > 0\) and assumption \ref{thm.GJ.conditionpositivity} is satisfied. Then, for \(b \neq \theta_J\),
\[
g_J(b,v) = \frac{v}{2s_{\theta_J}} \frac{1 - 2F_{\theta_J}(b)}{\theta_J - b},
\]
and \(g_J\) is extended to a continuous function on \(\mathbb{R} \times [0,\infty)\) using l'Hospital's rule. Since \(g_J(\cdot,1)\) is continuous on \(\mathbb{R}\), in order to prove that it is bounded on \(\mathbb{R}\) it suffices to prove that it is bounded outside of the compact interval \([\theta_J - 1, \theta_J + 1]\), and this is easy to see: if \(\abs{b - \theta_J} > 1\), then \(\abs{g_J(b,1)} < \nicefrac{1}{2s_{\theta_J}}\) since \(F_{\theta_J}\) takes its values in \((0,1)\). This shows that assumption \ref{thm.GJ.conditionboundedness} is satisfied.

To verify the last assumption, notice that for \(b \neq \theta_J\) and any \(v \geq 0\),
\[
(\theta_J - b) \partial_1 g_J(b,v) = \frac{v}{2s_{\theta_J}} \Bigl( \frac{1 - 2F_{\theta_J}(b)}{\theta_J - b} - 2F_{\theta_J}'(b) \Bigr) = g_J(b,v) \Bigl( 1 - \frac{2(\theta_J - b) F_{\theta_J}'(b)}{1 - 2F_{\theta_J}(b)} \Bigr).
\]
Then, with \(x \defeq \nicefrac{(b - \theta_J)}{s_{\theta_J}}\),
\[
1 - 2F_{\theta_J}(b) = - \frac{1 - e^{-x}}{1 + e^{-x}}
\qquad\text{and}\qquad
F_{\theta_J}'(b) = \frac{1}{s_{\theta_J}} \frac{e^{-x}}{(1 + e^{-x})^2},
\]
so
\[
(\theta_J - b) \partial_1 g_J(b,v) = g_J(b,v) \Bigl( 1 - \frac{2x e^{-x}}{1 - e^{-2x}} \Bigr) = g_J(b,v) \Bigl( 1 - \frac{x}{\sinh x} \Bigr).
\]
Since \(\abs{x} \leq \abs{\sinh x}\) for all \(x \in \mathbb{R}\), it follows that \((\theta_J - b) \partial_1 g_J(b,v) \geq 0\) because \(g_J\) is nonnegative. Since this holds whenever \(b \neq \theta_J\), it holds for \(b = \theta_J\) by continuity, and assumption \ref{thm.GJ.conditionDgJ} holds.

Therefore, the conclusions of Theorem~\ref{thm.GJ} hold if the thresholds in population \(J\) follow a logistic distribution.

\section{Covariances from examples}
\label{apx.covs}

In this section, we provide Figures~\ref{fig.ex1.covs}, \ref{fig.ex2.covs} and \ref{fig.ex3.covs}, that show all covariances that are not variances in examples from Sections~\ref{sec.example1}, \ref{sec.example2} and \ref{sec.example3} respectively. 

\begin{figure}
\includegraphics[width=\linewidth]{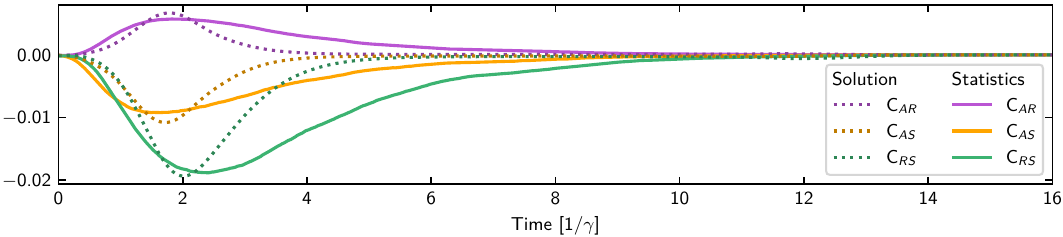}
\caption{Remaining covariances from the numerical simulations of Section~\ref{sec.example1}.}
\label{fig.ex1.covs}
\end{figure}

\begin{figure}
\includegraphics[width=\linewidth]{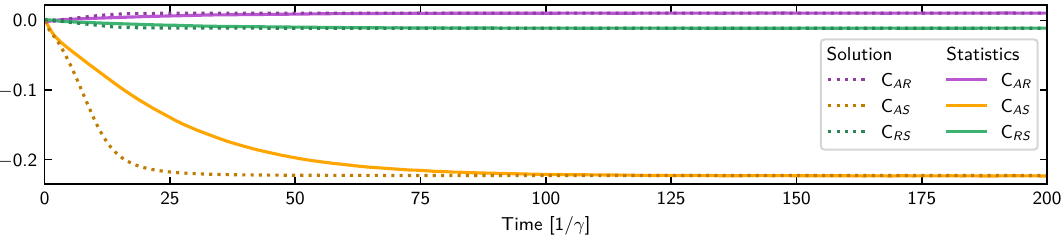}
\caption{Remaining covariances from the numerical simulations of Section~\ref{sec.example2}.}
\label{fig.ex2.covs}
\end{figure}

\begin{figure}
\includegraphics[width=\linewidth]{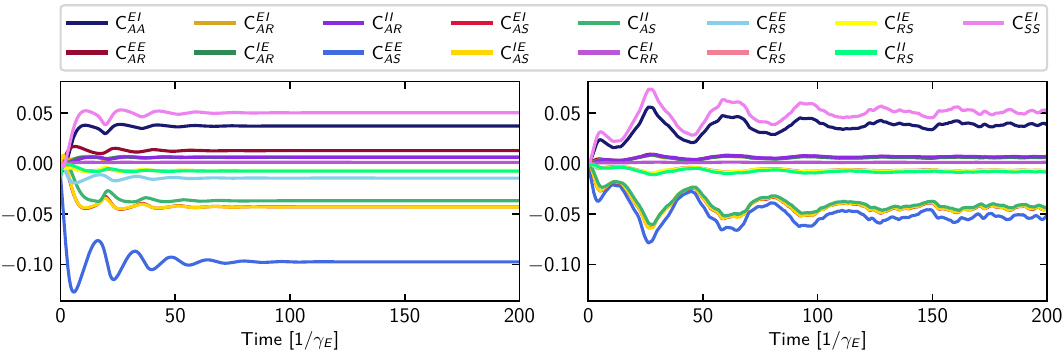}
\caption{Remaining covariances from the numerical simulations of Section~\ref{sec.example3}. On the left are components of the solution of the second-order that correspond to covariances, and on the right are statistics computed from the same simulated trajectories as in Figure~\ref{fig.ex3}.}
\label{fig.ex3.covs}
\end{figure}

\printbibliography

\end{document}